\documentclass[twoside,a4paper]{article}
\usepackage{amsmath,graphicx,amssymb,fancyhdr,amsthm} 
\newtheorem{thm}{Theorem}[section] 
\newtheorem{cor}[thm]{Corollary}

\theoremstyle{definition} 
  
\theoremstyle{remark}  
  
\def\beq{\begin{eqnarray}}  
\def\eeq{\end{eqnarray}}  
\def\bsp{\begin{split}}  
\def\esp{\end{split}}

\def\Tr{\mathrm{Tr}}  
\def\d{\mathrm{d}}  
\def\diag{\mathrm{diag}}

\newcommand{\mf}[1]{{\mathfrak #1}}

\newcommand{\mbold}[1]{\mbox{\boldmath{\ensuremath{#1}}}}

\begin{document}  
  
\title{\Large\textbf{Higher dimensional bivectors and classification of the Weyl operator}}  
\author{{\large\textbf{Alan Coley$^1$ and Sigbj\o rn Hervik$^2$} }  
 \vspace{0.3cm} \\
$^{1}$Department of Mathematics and Statistics,\\
Dalhousie University,
Halifax, Nova Scotia,\\
Canada B3H 3J5
\vspace{0.3cm}\\
$^{2}$Faculty of Science and Technology,\\  
 University of Stavanger,\\  N-4036 Stavanger, Norway   
\vspace{0.3cm} \\   
\texttt{aac@mathstat.dal.ca, ~sigbjorn.hervik@uis.no} }  
\date{\today}  
\maketitle  
\pagestyle{fancy}  
\fancyhead{} 
\fancyhead[EC]{A. Coley and S. Hervik}  
\fancyhead[EL,OR]{\thepage}  
\fancyhead[OC]{The Weyl operator}  
\fancyfoot{} 
  
\begin{abstract}

We develop the bivector formalism in 
higher  dimensional Lorentzian spacetimes. 
We define the Weyl bivector operator in a manner consistent with 
its boost-weight decomposition.  We then algebraically classify
the Weyl tensor, which gives rise to a refinement  in dimensions
higher than four of the usual
alignment (boost-weight) classification, 
in terms of the irreducible representations of the spins. 
We are consequently able to define a number of new
algebraically special cases. In particular, the classification
in five dimensions is discussed in some detail.
In addition, utilizing the  (refined) algebraic classification, we are able to
prove some interesting results when the Weyl tensor has (additional) symmetries.

\end{abstract} 
\section{Introduction}

Higher dimensional Lorentzian spacetimes are of considerable 
interest in current theoretical physics. Therefore, it is useful 
to have generalisations to higher dimensions of the mathematical 
tools (which have been successfully employed in 4d) to study 
higher dimensional Lorentzian spacetimes. In particular, the 
introduction of the {\em alignment theory} \cite{class} has made 
it possible to algebraically classify any tensor in a Lorentzian 
spacetime of arbitrary dimensions by boost weight, including the 
classification of the Weyl tensor and the Ricci tensor (thus 
generalizing the Petrov and Segre classifications in 4d). In 
addition, using alignment theory the higher dimensional Bianchi 
and Ricci identities have been computed  \cite{PODREF} and a 
higher dimensional generalization of {\em Newman-Penrose 
formalism} has been presented \cite{class}.

It is also of interest to develop other mathematical tools for 
studying higher dimensional Lorentzian spacetimes. Two other types 
of classification can be obtained by introducing {\em bivectors} and 
{\em spinors}. The algebraic classification of the Weyl tensor using 
bivectors or spinors is 
equivalent to the algebraic classification of the Weyl tensor by 
boost weight in 4d (i.e., the Petrov classification \cite{kramer}). However, these 
classifications are different in higher dimensions. In particular, 
the algebraic classification using alignment theory is rather 
coarse, and it may be useful to develop the algebraic 
classification of the Weyl tensor using bivectors or spinors to 
obtain a more refined classification. It is the purpose of this 
paper to develop the bivector formalism in higher dimensions. 
 
Indeed, we are primarily motivated to develop the bivector formalism in 
higher dimensions in order to generalize the various theorems 
presented \cite{inv,kundt,CSI4} to higher dimensions. 
In  \cite{inv} it was shown that in 4d a Lorentzian spacetime 
metric is either $\mathcal{I}$-non-degenerate, and hence locally 
characterized by its scalar polynomial curvature invariants 
constructed from the Riemann tensor and its covariant derivatives, 
or is a degenerate Kundt spacetime. The (higher dimensional) Kundt 
spacetimes admit a kinematic frame in which there exists a null 
vector $\ell$ that is geodesic, expansion-free, shear-free and 
twist-free; the Kundt metric is written in \cite{kundt,Higher}. 
The {\it degenerate} Kundt spacetimes \cite{kundt} are such that 
there exists a common null frame in which the geodesic, 
expansion-free, shear-free and twist-free null vector $\ell$ is 
also the null vector in which all positive boost weight terms of 
the Riemann tensor and its covariant derivatives are zero. 
Therefore, the degenerate Kundt spacetimes are the only spacetimes 
in 4d that are {\em not} $\mathcal{I}$-non-degenerate, and their 
metrics are the only metrics not determined by their curvature 
invariants \cite{inv}. The degenerate Kundt spacetimes were 
classified algebraically by the Riemann tensor and its covariant 
derivatives in \cite{kundt}. Recently,  a number
of  exact higher dimensional solutions  have been studied \cite{other}, 
including a class of exact higher dimensional 
Einstein-Maxwell Kundt spacetimes \cite{POD}.
 
In the proof of the $\mathcal{I}$-non-degenerate theorem in 
\cite{inv}, it was necessary to determine for which Segre types 
for the Ricci tensor the spacetime is 
$\mathcal{I}$-non-degenerate. In each case, it was found that the 
Ricci tensor, considered as a curvature operator, admits a 
timelike eigendirection; therefore, if a spacetime is not 
$\mathcal{I}$-non-degenerate, its Ricci tensor must be of a 
particular Segre type. By analogy, in higher dimensions it is possible to show  that 
if the algebraic type of the Ricci tensor (or any other rank 2 
curvature operator written in `Segre form') is not of one of the 
types $\{(1,1)11...\}$, $\{2111...\}$, $\{(21)11...\}$, $\{(211)11...\}$, 
$\{3111...\}$, etc. (or their degeneracies; for example, 
$\{3(11)1...\}$, etc.), then the spacetime is 
$\mathcal{I}$-non-degenerate. Similar results in terms of the Weyl 
tensor in bivector form in 4d were proven. However, to generalize 
these results to higher dimensions it is necessary to develop the bivector 
formalism for the Weyl tensor in higher dimensions.

Lorentzian spacetimes for which all polynomial scalar invariants 
constructed from the Riemann tensor and its covariant derivatives 
are constant are called CSI spacetimes. All curvature invariants 
of all orders vanish in an $n$-dimensional Lorentzian  VSI 
spacetime \cite{Higher}. If the aligned, repeated, null vector 
$\bf {\ell}$ is also covariantly constant, the spacetime is CCNV. 
Spacetimes which are Kundt-CSI, VSI or CCNV are degenerate-Kundt 
spacetimes. Supersymmetric solutions of supergravity theories have played an 
important role in the development of string theory (see, for 
example, \cite{grover}). Supersymmetric solutions in $M$-theory that 
are not static admit a CCNV \cite{class}. This class includes a 
subset of the Kundt-CSI and the VSI spacetimes as special cases. 
The higher dimensional VSI and CSI degenerate Kundt spacetimes are 
consequently of fundamental importance since they are solutions of 
supergravity or superstring theory, when supported by appropriate 
bosonic fields \cite{CFH}.

In \cite{CSI4} we investigated 4d Lorentzian CSI 
spacetimes and proved that if a 4d spacetime is CSI, then either 
the spacetime is locally homogeneous or the spacetime is a Kundt 
spacetime for which there exists a frame such that the positive 
boost weight components of all curvature tensors vanish and the 
boost weight zero components are all constant. Other 
possible higher dimensional generalizations of the results in
\cite{CSI4} were discussed in \cite{CSI}. Again, the first step is to 
investigate the curvature operators in higher dimensions and to 
classify these (especially the Weyl tensor) for the various 
algebraic types. This necessitates a bivector formalism for the 
Weyl tensor in higher dimensions.

It is also possible to generalize Penrose's spinor calculus 
\cite{PENROSE} of 4d Lorentzian geometry to higher dimensions. 
Recently, partly motivated by the discovery of 
exact  black ring solutions in five (and higher) dimensions 
\cite{HIGHER-D-REVIEW}, a spinor calculus has been explicitly developed in 5d 
\cite{NP5d}. When the spin covariant derivative is 
compatible with the spacetime metric and the symplectic structure 
(in 5d, the spin space is a 4d complex vector space 
endowed with an antisymmetric tensor which plays the role of a 
metric tensor), it can be shown \cite{NP5d} that the spin covariant derivative is unique and
the 5d {\em curvature spinors} (e.g., the Ricci spinor 
and the Weyl spinor) can be defined. 
It is then desirable to generalize to 5d the Newman-Penrose formalism
\cite{NP5d} and the algebraic classification of the 
Weyl spinor \cite{DeSmet,robspin}. In particular, in 5d the  
Weyl tensor can be represented by the Weyl spinor, which 
is equivalent to the Weyl polynomial (which is a homogeneous quartic polynomial 
in 3 variables) \cite{robspin};  the algebraic classification can then be realised by putting 
the Weyl polynomial into a normal form.

 In this paper we consider bivectors in arbitrary dimensions, and particularly
their properties under Lorentz transformations, with the aim to algebraically
classify the Weyl tensor in higher dimensions 
(based, in part, on the eigenbivector problem).
We first define the Weyl bivector operator in any dimension while
keeping in mind the boost weight (b.w.) decomposition in order to utilize the
algebraic classification of  \cite{class}.  We then refine the classification of  \cite{class}
in terms of the irreducible representations of the spins.  This enables us to define
several algebraically special cases; for example, a number of
special subcases in 5d are presented in section \ref{sect6}.
In particular, the refinement is particularly
amenable to cases where the Weyl tensor has (additional) symmetries. In the final sections
we present some results concerning algebraically special Weyl tensors with symmetries
and make some brief comments on eigenvalue problems and 
$\mathcal{I}$-non-degeneracy. We note that brief reviews of the algebraic classification
of the Weyl tensor of \cite{class} and curvature operators introduced in  \cite{inv}
are given in the last two appendices.

\section{The Bivector operator}
Given a vector basis ${\bf k}^{\mu}$ we can define a set of (simple) bivectors 
\[ {\bf F}^A\equiv {\bf F}^{\mu\nu}={\bf F}^{[\mu\nu]}={\bf k}^{\mu}\wedge{\bf k}^{\nu}, \]
spanning the space of antisymmetric tensors of rank 2. The interpretation of such a basis is clear: ${\bf F}^A$ spans a 2-dimensional plane defined by the vectors ${\bf k}^\mu$ and ${\bf k}^\nu$. This implies (and follows also directly from the definition) that an antisymmetric tensor $G_{\alpha\beta}$ is a simple bivector if and only if $G_{\alpha[\beta}G_{\gamma\delta]}=0$.

Consider a $d=(2+n)$-dimensional Lorentzian space with the following null-frame $\{ {\mbold\ell}, {\bf n},{\bf m}^i\}$ so that the metric is
\[ 
\d s^2=2{\mbold\ell}{\bf n}+\delta_{ij}{\bf m}^i{\bf m}^j.
\] 
Let us consider the following bivector basis (in that order): 
\[ {\mbold\ell}\wedge{\bf m}^i, \quad  {\mbold\ell}\wedge{\bf n},\quad  {\bf m}^i\wedge{\bf m}^j,\quad   {\bf n}\wedge{\bf m}^j,\] or for short: $[0i]$, $[01]$, $[ij]$, $[1i]$. The Lorentz metric also induces a metric, $\eta_{MN}$, in bivector space. If $m=n(n-1)/2$, then
\[ (\eta_{MN})=\frac 12 \begin{bmatrix} 
0 & 0 & 0 & {\sf 1}_n \\
0 & -1& 0 & 0\\
0 & 0 & {\sf 1}_m & 0\\
{\sf 1}_n & 0 & 0 & 0
\end{bmatrix},
\]
where ${\sf 1}_n$, and ${\sf 1}_m$ are the unit matrices of size $n\times n$ and $m\times m$, respectively, and we have assumed the bivector basis is in the order given above. This metric can then be used to raise and lower bivector indices. 

Let $V\equiv \wedge^2T^*_pM$ be the vector space of bivectors at a point $p$.  Then
consider an operator ${\sf C}=(C^M_{~N}):  V\mapsto V$.  We will assume that it is
symmetric in the sense that $C_{MN}=C_{NM}$.

With these assumptions, the operator ${\sf C}$ can be written on the following $(n+1+m+n)$-block form: 
\beq\label{WeylOperator}
{\sf C}=\begin{bmatrix}
M & \hat{K} & \hat{L} & \hat{H} \\
\check{K}^t & -\Phi & -A^t & -\hat{K}^t \\
\check{L}^t & A & \bar{H} & \hat{L}^t \\
\check{H} & -\check{K} & \check{L} & M^t 
\end{bmatrix}
\eeq
Here, the block matrices $H$ (barred, checked and hatted) are all symmetric. Checked (hatted) matrices correspond to negative (positive) b.w. components.

The eigenbivector problem can now be formulated as follows. A bivector $F^A$ is an eigenbivector of ${\sf C}$ if and only if 
\[ C^M_{~N}F^N=\lambda F^M, \quad \lambda\in\mathbb{C}.\] 
Such eigenbivectors can now be determined using standard results from linear algebra. 

\subsection{Lorentz transformations} 
The Lorentz transformations in $(2+n)$-dimensions, $SO(1,1+n)$ act on bivector space via its adjoint representation, $\Gamma$: 
\[ SO(1,1+n)\overset{\Gamma}{\longrightarrow}SO(1+n,m+n).\]
Therefore, in terms of the Lie algebras: 
\[ \mf{s}\mf{o}(1,1+n)\overset{X}{\longrightarrow}\mf{s}\mf{o}(1+n,m+n).\]
 So the Lie algebra representation $X$, will be antisymmetric with respect to $\eta_{MN}$:
\[ \eta_{MR}X^R_{~N}+\eta_{NR}X^R_{~M}=0.\]

Lorentz transformations consist of boosts, spins and null rotations. 
Let us consider each in turn.

\subsubsection{Boosts}
The boosts, $B:~V\mapsto V$, can be represented by the following matrix: 
\beq
B=\begin{bmatrix} 
e^{\lambda} & 0 & 0 & 0 \\
0 & 1& 0 & 0\\
0 & 0 & {\sf 1}_m & 0\\
0 & 0 & 0 & e^{-\lambda}
\end{bmatrix}.
\eeq
This matrix has boost weight 0, and, if we decompose ${\sf C}$ according to the boost weight, ${\sf C}=\sum_b({\sf C})_b$, the boosts have the following property: 
\[ B({\sf C})_bB^{-1}=e^{b\lambda}({\sf C})_b.\] 

\subsubsection{Spins}
The spins can be considered as a matrix 
\beq 
R=\begin{bmatrix} 
G & 0 & 0 & 0 \\
0 & 1& 0 & 0\\
0 & 0 & \bar{G} & 0\\
0 & 0 & 0 & G
\end{bmatrix},
\eeq
where $G\in SO(n)$ and $\bar{G}$ is its adjoint representation; i.e., for an antisymmetric $n\times n$ matrix $A\in\mf{s}\mf{o}(n)$, so that $A=A^B\epsilon_B$ where $\epsilon_B$ is a basis for $\mf{s}\mf{o}(n)$, $GAG^{-1}=\epsilon_C\bar{G}^C_{~B}A^B$. 

The spins also have boost weight 0. 

\subsubsection{Null rotations}
The null rotations can be considered as follows. For a column vector $z\in \mathbb{R}^n$, define the b.w. $-1$ operator (in the Lie algebra representation of the null rotations):
\beq
\check{N}(z)=\begin{bmatrix} 
0 & 0 & 0 & 0 \\
z^t & 0& 0 & 0\\
-Z^t & 0 & 0 & 0\\
0 & z & Z & 0
\end{bmatrix}, \quad z=(z^i), \quad Z=(Z^k_{~B})=(z_i\delta^k_{~j}-z_j\delta^k_{~i})
\eeq
There is also an analogous b.w. $+1$ operator $\hat{N}(z)$. The null rotations can now be found by exponentiation: $N(z)=e^{\check{N}(z)}$. This can be equated to (which is also the b.w. decomposition):
\beq
N(z)=e^{\check{N}(z)}=1+\check{N}(z)+\frac 12\check{N}(z)^2.
\eeq
Note that the inverse is as follows: 
\[
N^{-1}(z)=N(-z)=e^{-\check{N}(z)}=1-\check{N}(z)+\frac 12\check{N}(z)^2.
\]
The null rotations are the only transformations that mix up the boost weight decomposition. 

\section{The Weyl operator}
Let us henceforth consider the Weyl tensor. For the Weyl tensor we can make the following identifications (indices $B,C,..$ should be understood as indices over $[ij]$) \footnote{Note that the presence of $\tfrac 12$ in front of the bivector metric implies that we have to be a bit careful when lowering/raising indices. Here we will \emph{define} components $C^M_{~N}$ to be the components of the Weyl tensor, which differs from the definition in \cite{HallBook} in 4d.}
: 
\beq
\hat{H}^i_{~j}=C_{0i0j}, && \check{H}^i_{~j}=C_{1i1j},\\
\hat{L}^i_{~B}=C_{0ijk}, && \check{L}^i_{~B}=C_{1ijk}, \\
\hat{K}^i=C_{010i}, && \check{K}^i=-C_{011i}, \\
M^i_{~j}=C_{1i0j}, && \Phi=C_{0101}, \\
A^{B}=C_{01ij}, && \bar{H}^B_{~C}=C_{ijkl}.
\eeq
The Weyl tensor is also traceless and obeys the Bianchi identity: 
\[ C^{\mu}_{~\alpha\mu\beta}=0, \quad C_{\alpha(\beta\mu\nu)}=0.\] 
These conditions translate into conditions on our block matrices. Let us consider each boost weight in turn, and let us use this to express these matrices into irreducible representations of the spins.

Note that only non-positive boost weights are considered. To get the positive boost weights, replace all checked quantities with hatted ones.

\subsection{Boost weight 0 components}
Here we have 
\beq
C_{0101}={C_{0i1}}^i, \quad C_{0i1j}=-\tfrac 12{C_{ikj}}^k+\tfrac 12C_{01ij},\quad C_{i(jkl)}=0. 
\eeq 
 Starting with the latter, this means that the matrix $\bar{H}^B_{~C}$ fulfills the reduced Bianchi identies. It is also symmetric which means that it has the same symmetries as an $n$-dimensional Riemann tensor. Hence, we can split this into irreducible parts over $SO(n)$ using the ``Weyl tensor'', ``trace-free Ricci'' and ``Ricci scalar'' as follows ($n>2$): 
\beq
\bar{H}_{BC}&=&\bar{C}_{ijkl}+\frac{2}{n-2}\left(\delta_{i[k}\bar{R}_{l]j}-\delta_{j[k}\bar{R}_{l]i}\right)-\frac{2}{(n-1)(n-2)}\bar{R}\delta_{i[k}\delta_{l]j}, \\
\bar{R}_{ij}&=&\bar{S}_{ij}+\tfrac 1n\bar{R}\delta_{ij}.
\eeq
The remaining Bianchi identities now imply: 
\beq
M_{ij}&=&-\tfrac{1}{2n}\bar{R}\delta_{ij}-\tfrac 12 \bar{S}_{ij}-\tfrac 12A_{ij}\\
\Phi&=&-\tfrac 12\bar{R}.
\eeq
This means that the b.w. 0 components can be specified using the irreducible compositions above: 
\beq
\underbrace{\bar{R},\qquad \bar{S}_{ij}}_{\tfrac{n(n+1)}2},\qquad \underbrace{A_{ij}}_{\tfrac{n(n-1)}2},\qquad \underbrace{\bar{C}_{ijkl}}_{\tfrac{n(n+1)(n+2)(n-3)}{12}}. 
\eeq
Note that in lower dimensions we have the special cases for the $n$-dimensional Riemann tensor: 
\begin{itemize}
\item{} Dim 4 ($n=2$): $\bar{S}_{ij}=\bar{C}_{ijkl}=0$.
\item{} Dim 5 ($n=3$): $\bar{C}_{ijkl}=0$.
\item{} Dim 6 ($n=4$): $\bar{C}_{ijkl}=\bar{C}^+_{ijkl}+\bar{C}^-_{ijkl}$, 
where $\bar{C}^+$ and $\bar{C}^-$ are the self-dual, and  the anti-self-dual parts of the Weyl tensor, respectively. The same can be done with the antisymmetic tensor $A_{ij}=A^+_{ij}+A^-_{ij}$. 
\end{itemize}

A spin $G\in SO(n)$ acts as follows on the various matrices: 
\beq
(M,\Phi,A,\bar{H})\mapsto (GMG^{-1},\Phi,\bar{G}A,\bar{G}\bar{H}\bar{G}^{-1}).
\eeq
If $C_{\mu\nu\alpha\beta}$ is the Weyl tensor (we recall that Greek indices run over the full spacetime manifold), the type D case is therefore completely characterised in terms 
of a $n$-dimensional Ricci tensor, a Weyl tensor, and an antisymmetric tensor $A_{ij}$, as explained earlier. 
Therefore, let us  first use the spins to diagonalise the ``Ricci tensor'' 
$\bar{R}_{ij}$. This matrix can then be described in terms of the Segre-like notation 
corresponding to its eigenvalues. There are two types of special cases 
worthy of consideration. The first is the usual degeneracy in the eigenvalues which occurs when two, or more, eigenvalues are equal. The other special case happens when a eigenvalue is zero. Using a Segre-like notation, we therefore get the types 
\beq
\bar{R}_{ij}:&&\{1111..\}, \quad \{(11)11..\}, \quad\{(11)(11)...\}\quad \text{etc.}\nonumber\\ 
&&
\{0111..\}, \quad \{0(11)1..\}, \quad\{00(11)...\}\quad \text{etc.},
\eeq
where a zero indicates a zero-eigenvalue. 
Regarding the antisymmetric matrix $A_{ij}$, this must be of even rank. 
A standard result gives the canonical block-diagonal form of $A$:
\beq
A=\mathrm{blockdiag}\left(\begin{bmatrix}
0 & -a_1 \\
a_1 & 0 
\end{bmatrix},\begin{bmatrix}
0 & -a_2 \\
a_2 & 0 
\end{bmatrix}, \cdots,\begin{bmatrix}
0 & -a_k \\
a_k & 0 
\end{bmatrix},0,\cdots, 0\right).
\eeq
Therefore, we see that we can characterise an antisymmetric matrix using the rank; i.e., $2k$. Note that the stabiliser is given by
\[ \underbrace{O(2)\times \cdots\times O(2)}_{k \text{ factors}}\times O(n-2k)\subset O(n).\]
The antisymmetric matrix $A$ may also possess degeneracies allowing for more symmetries. For example, if $a_1=a_2$, then the symmetry $O(2)\times O(2)$ acting on these blocks, is being enhanced to $U(2)$ (under the adjoint action). 

Finally, characterisation of the ``Weyl tensor'' $\bar{C}_{ijkl}$ reduces to characterising the Weyl tensor of the corresponding fictitious $n$-dimensional Riemannian manifold.

Algebraically, we can also consider alignments; e.g., if the antisymmetric tensor 
$A_{ij}$ which can be an eigenvector or not, to $\bar{H}^B_{~C}$. If the bivector $A^B$ is indeed an eigenbivector then this would be a special case.

\subsection{Boost weight -1 components}
For the b.w. $-1$ components we have the following identities: 
\beq
C_{011i}=-{C_{1ji}}^j, \quad C_{1(ijk)}=0.
\label{id-1}\eeq
Let us start with the latter identity which involves the matrix $\check{L}=(\check{L}^i_{~B})$. The index $B$ is an antisymmetric index $[jk]$, therefore, we can consider the tensor
\[ \check{L}^i_{~jk}=-\check{L}^i_{~kj}\]
Such tensors have been classified (see, e.g., \cite{TV}). Furthermore, 
the Bianchi identity implies that $\check{L}_{(ikj)}=0$, which gives:
\begin{enumerate} 
\item{} For $n=2$, there exists a vector $\check{v}_i$ such that 
\[ \check{L}^i_{~jk}=\delta^i_{~j}\check{v}_k-\delta^i_{~k}\check{v}_j.\] 
\item{} For $n\geq 3$, there exists an irreducible and orthogonal decomposition so that 
\[ \check{L}^i_{~jk}=\delta^i_{~j}\check{v}_k-\delta^i_{~k}\check{v}_j+\check{T}^i_{~jk},\]
where $\check{T}_{(ijk)}=0$, and $\check{T}^i_{~ji}=0$. 
\end{enumerate}
The first of the identities (\ref{id-1}) now gives
\beq
\check{K}^i=-(n-1)\check{v}^i.
\eeq
Therefore, the boost weight -1 components can be specified using the irreducible decomposition: 
\beq
\underbrace{\check{v}^i}_n, \qquad \underbrace{\check{T}^i_{~jk}}_{{\tfrac{n(n^2-4)}{3}}} \quad \text{where}\quad \check{T}^i_{~(jk)}=\check{T}_{(ijk)}=\check{T}^i_{~ji}=0.
\eeq
Also note that 
\beq
\check{K}^i=0 &\Leftrightarrow& \check{L}^i_{~ji}=0, \\
\check{T}^i_{~jk}=0 & \Leftrightarrow & \check{L}_{ijk}\check{L}^{ijk}=\tfrac{2}{n-1}\check{L}^{ji}_{~~j}\check{L}^k_{~ik}.
\eeq
These conditions can be used to characterise the b.w. $-1$ components into a type (A) case (for which $\check{K}=0$) and a type (B) case (for which $\check{T}=0$). 

\subsection{Boost weight -2 components}
Here we have that
\beq
{C_{1i1}}^i=0. 
\eeq
This identity implies that the matrix $\check{H}$ is traceless: 
\beq
\Tr \check{H}=0.
\eeq
The matrix $\check{H}$ can be characterised in terms of its eigenvalues, $\check{\lambda}_i$. These eigenvalues fulfill $\sum_i\check{\lambda}_i=0$.

\section{The algebraic classification} 
Let us consider the classification in \cite{class}, and investigate the different
algebraic types in turn.  In general, there will also be algebraically special cases of type G as $\hat{H}$ can have degenerate cases (e.g., $\{ (11)1\}$) or there might be alignment of the various blocks.

\subsection{Type I}
The tensor $C_{\mu\nu\alpha\beta}$ is of type I if and only if there exists a null 
frame such that the operator ${\sf C}$ takes the form:
\beq
{\sf C}=\begin{bmatrix}
M & \hat{K} & \hat{L} & 0 \\
\check{K}^t & -\Phi & -A^t & -\hat{K}^t \\
\check{L}^t & A & \bar{H} & \hat{L}^t \\
\check{H} & -\check{K} & \check{L} & M^t 
\end{bmatrix}
\eeq
If  $C_{\mu\nu\alpha\beta}$ is the Weyl tensor, 
there would be two subcases for which:\footnote{Note that $\hat{K}=0$ and $\hat{T}$ are invariant and unambiguous statements when only one WAND is present, but not necessarily when there are several WANDs (i.e., for type I$_i$ \cite{class}).} 
\begin{itemize}
\item{} Type I(A): $\hat{K}=0$.
\item{} Type I(B): $\hat{T}=0$.
\end{itemize}
 More precisely, we shall refer to these as subcases  $\hat{A}$ 
and $\hat{B}$ to differentiate
the subcases in subsection \ref{typeIII}; however, we shall omit the quantifiers if 
the context is clear.

If the type I Weyl tensor is in the aligned null-frame, then criteria for the Weyl tensor being in each of these subcases can be given as follows:
\begin{itemize}
\item{} Type I(A) $\Leftrightarrow$ $C^{i}_{ji0}=0$.
\item{} Type I(B) $\Leftrightarrow$ $C_{ijk0}C^{ijk}_{\phantom{ijk}0}=\tfrac{2}{n-1}C^{ji}_{\phantom{ji}j0}C^k_{\phantom{k}ik0}$. 
\end{itemize}

\subsection{Type II}
The tensor $C_{\mu\nu\alpha\beta}$ is of type II if and only if there 
exists a null frame such that the operator ${\sf C}$ takes the form:
\beq
{\sf C}=\begin{bmatrix}
M & 0 & 0 & 0 \\
\check{K}^t & -\Phi & -A^t & 0 \\
\check{L}^t & A & \bar{H} & 0 \\
\check{H} & -\check{K} & \check{L} & M^t 
\end{bmatrix}
\eeq
Then there will be algebraic subcases according to whether some of the irreducible 
components of b.w. 0 are zero or not. A complete characterisation of all such subcases 
is very involved in its full generality. However, a rough classification in terms of the vanishing the irreducible components under spins can be made: 
\begin{itemize}
\item{} Type II(a): $\bar{R}=0$
\item{} Type II(b): $\bar{S}_{ij}=0$
\item{} Type II(c): $\bar{C}_{ijkl}=0$.
\item{} Type II(d): $A=0$
\end{itemize}
Note that we can also have a combination of these; for example, type II(ac), which means that $\bar{R}=0$ and $\bar{C}=0$.

\subsection{Type D}
The tensor $C_{\mu\nu\alpha\beta}$ is of type D if and only if there exists a null 
frame such that the operator ${\sf C}$ takes the form:
\beq
{\sf C}=\begin{bmatrix}
M & 0 & 0 & 0 \\
0 & -\Phi & -A^t & 0 \\
0 & A & \bar{H} & 0 \\
0 & 0 & 0 & M^t 
\end{bmatrix}
\eeq
Here all Lorentz transformations have been utilised, \emph{except} for the spins. 
In addition, we note that type D tensors are invariant under boosts. 

As in the case of type II, we can have subcases as follows:
\begin{itemize}
\item{} Type D(a): $\bar{R}=0$
\item{} Type D(b): $\bar{S}_{ij}=0$
\item{} Type D(c): $\bar{C}_{ijkl}=0$
\item{} Type D(d): $A=0$.
\end{itemize}

 Note that in principle analogous algebraically special subcases exist for type I etc.
For example, an algebraically special type I(A) can also obey further conditions like 
 (a), (b), (c) or (d) (e.g., type I(Aad), etc).

\subsection{Type III}
\label{typeIII}
The tensor $C_{\mu\nu\alpha\beta}$ is of type III if and only if 
there exists a null frame such that the operator ${\sf C}$ takes the form:
\beq
{\sf C}=\begin{bmatrix}
0 & 0 & 0 & 0 \\
\check{K}^t & 0 & 0 & 0 \\
\check{L}^t & 0 & 0 & 0 \\
\check{H} & -\check{K} & \check{L} & 0 
\end{bmatrix}
\eeq
If  $C_{\mu\nu\alpha\beta}$ is the Weyl tensor, then the orthogonal 
decomposition of $\check{L}$ defines two subcases:
\begin{itemize}
\item{} Type III(A): $\check{K}=0$.
\item{} Type III(B): $\check{T}=0$.
\end{itemize}
More precisely, types  III(\v{A}) and  III(\v{B}).
Again, these cases can be given in terms of conditions on the Weyl tensor:
\begin{itemize}
\item{} Type III(A) $\Leftrightarrow$ $C^{i}_{ji1}=0$.
\item{} Type III(B) $\Leftrightarrow$ $C_{ijk1}C^{ijk}_{\phantom{ijk}1}=\tfrac{2}{n-1}C^{ji}_{\phantom{ji}j1}C^k_{\phantom{k}ik1}$. 
\end{itemize}

\subsection{Type N}
The tensor $C_{\mu\nu\alpha\beta}$ is of type N if and only if there exists a null frame such that the operator ${\sf C}$ takes the form
\beq
{\sf C}=\begin{bmatrix}
0 & 0 & 0 & 0 \\
0 & 0 & 0 & 0 \\
0 & 0 & 0 & 0 \\
\check{H} & 0 & 0 & 0 
\end{bmatrix}
\eeq

A type N tensor can be completely classified as follows.  The matrix $\check{H}$ is
symmetric, so by using the spins we can therefore diagonalise this matrix completely.  The
type N tensor is therefore characterised by the eigenvalues of the matrix $\check{H}$.  If
$C_{\mu\nu\alpha\beta}$ is the Weyl tensor, this matrix is traceless and, using a
Segre-like notation, we get the following possibilities in low dimensions:
\begin{itemize}
\item{} Dim 4: $\{11\}$ 
\item{} Dim 5: $\{111\}$, $\{(11)1\}$,  $\{110\}$
\item{} Dim 6: $\{1111\}$, $\{(11)11\}$, $\{(111)1\}$, $\{(11)(11)\}$, $\{1110\}$, $\{(11)10\}$,  $\{1100\}$  
\end{itemize}

\section{Dimension 4 ($n=2$)}
Let us consider the special case of dimension 4 and show how it reduces to the 
standard analysis. In 4 dimensions, the Weyl operator can always be put in type I form by using a null rotation (hence, $\hat{H}=0$). Furthermore, the irreducible representations under the spins are: 
\[ \hat{v}^i, \quad \bar{R},\quad A, \quad \check{v}^i, \quad \check{H}.\] 
We notice that seemingly there are a total of 8 components; however, we still have the 
unused freedom of one spin, one boost and two null-rotations. 
In each of the algebraically special cases we can use these to simplify the Weyl 
tensor even further. Since the 4 dimensional case is well-known, 
let us only consider type D and type III for illustration. 
 \subsection{Type D}
For $n=2$, the Weyl tensor reduces to specifying two scalars, namely $\bar{R}$ and $A_{34}$. We now get
\beq
M=\begin{bmatrix}
-\tfrac 14\bar{R} & -\tfrac 12A_{34} \\
\tfrac 12A_{34} & -\tfrac 14\bar{R} 
\end{bmatrix}, \quad 
\begin{bmatrix}
-\Phi & -A^t  \\
A & \bar{H}
\end{bmatrix}=\begin{bmatrix}
\tfrac 12\bar{R} & -A_{34} \\
A_{34} & \tfrac 12\bar{R} 
\end{bmatrix};
\eeq
consequently, the Weyl operator ${\sf C}$ has eigenvalues:
\beq
\lambda_{1,2}=\lambda_{3,4}=-\frac 14(\bar{R}\pm 2iA_{34}), \quad \lambda_{5,6}=\frac 12(\bar{R}\pm 2iA_{34}).
\eeq
We note that this is in agreement with the standard type D analysis in 4 dimenions (see \cite{HallBook}). The type D case is boost invariant, and also invariant under spins, consequently the isotropy is 2-dimensional. 

The two subcases $A_{34}=0$ and $\bar{R}=0$ (type D(d) and D(a), respectively) 
are in 4 dimensions referred to the purely ``electric'' and ``magnetic'' cases, 
respectively. In 4 dimensions, there is a duality relation, $\star$, which interchanges 
these two cases; i.e., $C\mapsto \star C$ interchanges the electic and magnetic parts.

\subsection{Type III}
In dimension 4, only case III(B) exists, which is idential to the type III general case. 
Therefore, the type III case has only non-trivial $\check{v}^i$ and $\check{H}_{ij}$. However, we still have unused freedom in spins, null-rotations, and boosts. 

Using the equations for null-rotations in Appendix \ref{AppNull} when applying a null-rotation, 
the $\check{H}_{ij}$ transform as (for type III case):
\[ \underline{\check{H}}_{ij}= \check{H}_{ij}+4\check{v}_{(i}z_{j)}-2\delta_{ij}\check{v}^kz_{k}.\] 
Let us see if we can set $\underline{\check{H}}_{ij}=0$ using a null-rotation. This reduces to requiring: 
\beq
0&= &\check{H}_{33}+2(\check{v}_3z_3-\check{v}_4z_4) \nonumber \\
0&= &\check{H}_{34}+2(\check{v}_4z_3+\check{v}_3z_4).
\eeq
We can always find a $z^i$ solving these equations provided that $\check{v}_3^2+\check{v}_4^2\neq 0$. Consequently, for a proper type III spacetime, we can always use a null rotation so that $\check{H}_{ij}=0$. The remaining spin and boost can then be used to set $\check{v}^3=1$, $\check{v}^4=0$. This is a well-known result in 4 dimensions.

\section{Dimension 5 ($n=3$)}
\label{sect6}
The 5-dimensional case is considerably more difficult than the 4-dimensional case. 
The complexity drastically increases and hence the number of special cases also 
increases. However, the 5 dimensional case is still managable and some 
simplifications occur (compared to the general case). Most notably, $\bar{C}_{ijkl}=0$, 
and $\check{T}^i_{jk}$ can be written, using a  matrix $\check{n}_{ij}$, as follows 
(similarly for $\hat{T}^i_{~jk}$): 
\beq 
\check{T}^i_{~jk}=\varepsilon_{jkl}\check{n}^{li},
\label{T=n}\eeq
where the conditions on $\check{T}^i_{~jk}$ imply that $\check{n}^{ij}$ is symmetric and 
trace-free. 

Therefore, we have the following components in dimension 5:
\begin{itemize}
\item{} b.w. $+2$: $\hat{H}_{ij}$
\item{} b.w. $+1$: $\hat{v}^i$, $\hat{n}_{ij}$
\item{} b.w. $0$: $\bar{R}$, $\bar{S}_{ij}$, $A_{ij}$
\item{} b.w. $-1$: $\check{v}^i$, $\check{n}_{ij}$
\item{} b.w. $-2$: $\check{H}_{ij}$
\end{itemize}

Let us consider here the following order of the spatial bivectors: 
\[ [45],~[53],~[34],\]
and consider some of the special cases in dimension 5.
\subsection{Type D}
For a type D Weyl tensor only the following components can be non-zero:
\[ \bar{R}, \quad \bar{S}^i_{~ j},\quad  A_{ij},\]
where $i,j=3,4,5$. 
Let us use the spins to diagonalise $(\bar{S}^i_{~j})=\mathrm{diag}(S_{33},S_{44},S_{55})$. Without any further assumptions, the Weyl blocks take the form:
\beq
M&=&\begin{bmatrix}
-\tfrac 16\bar{R}-\tfrac 12S_{33} & -\tfrac 12A_{34} & \tfrac 12A_{53} \\
\tfrac 12A_{34} & -\tfrac 16\bar{R}- \tfrac 12S_{44} & -\tfrac 12A_{45} \\
-\tfrac 12A_{53} & \tfrac 12A_{45} & -\tfrac 16\bar{R}-\tfrac 12S_{55} 
\end{bmatrix}, \nonumber \\
\begin{bmatrix}
-\Phi & -A^t  \\
A & \bar{H}
\end{bmatrix}
&=& \begin{bmatrix}
\tfrac 12\bar{R} & -A_{45} & -A_{53} & -A_{34} \\
A_{45} & \tfrac 16\bar{R}-S_{33} & 0 & 0 \\
A_{53} & 0 &   \tfrac 16\bar{R}-S_{44} & 0 \\
A_{34} & 0 & 0 &  \tfrac 16\bar{R}-S_{55} 
\end{bmatrix}
\eeq
The general type D tensor thus has this canonical form. 

There are two special cases where we can use the extra symmetry to get the simplified  
canonical form:
\begin{itemize}
\item{} (i) $S_{33}=S_{44}=-\tfrac 12S_{55}$: $A_{53}=0$
\item{} (ii) $S_{33}=S_{44}=S_{55}=0$: $A_{53}=A_{45}=0$. 
\end{itemize}
We note that case (ii) will, without further assumptions, be invariant under spatial
rotations in the $[34]$-plane (in addition to the boost).  Assuming, in addition, that
$A_{45}=0$, then case (i)  is also invariant under a rotation in the $[34]$-plane.
Assuming that $A_{ij}$ vanishes completely (hence, type D(d)), we note that case (ii) enjoys the full
invariance under the spins (i.e., $SO(3)$).

\subsection{Type III}
A type III Weyl tensor can have the following non-zero components:
\[ \check{v}^i, \quad \check{n}_{ij}, \quad \check{H}_{ij}\]
Generally, we can use the spins to diagonalise $\check{n}^{ij}$. 
Therefore, the general case is  $\{111\}$ (all eigenvalues different), with the 
special cases $\{(11)1\}$, $\{110\}$ and $\{000\}$ (the latter case is, of course, 
type III(B) ). Furthermore, in the general case, the vector $\check{v}^i$ needs not 
be aligned with the eigenvectors of $\check{n}^{ij}$ (nor $\check{H}_{ij}$). 
There would consequently be special cases where $\check{v}^i$ is an eigenvector of 
$\check{n}^{ij}$ (and $\check{H}_{ij}$). Therefore, unlike in dimension 4 for which 
there is only one case, the type III case in 5 dimensions have a wealth of subcases. It is possible to explicitly delineate all algebraically special cases; 
this will be done elsewhere \cite{dim5}.

\subsection{ Weyl tensors with symmetry}
Let us consider the case of dimension 5 ($n=3$), where we impose certain (additional) 
symmetries on the Weyl tensor. The isotropy must be a subgroup of $SO(1,4)$, of which there are numerous subgroups. If the isotropy consists of a boost, then it must be of type D. 

Let us concentrate on groups $H$ such that $H\subset SO(4)\subset SO(1,4)$. 
These groups would therefore have spacelike orbits. The group $SO(4)$ is not simple and can be considered as:
\[ SO(4)\cong\frac{SU(2)\times SU(2)}{\mathbb{Z}_2}. \] 
Using the quarternions, $\mathbb{H}$, we can consider the action of $SO(4)$ 
as the action of the unit quaternions, $\mathbb{H}_1\cong SU(2)$, on a vector 
${\sf v}\in \mathbb{H}$ ($\sim \mathbb{R}^4$ as a vector space) as follows:
\beq
{\sf v}\mapsto q_1{\sf v}q_2^{-1}, \qquad (q_1,q_2)\in\mathbb{H}_1\times\mathbb{H}_1.
\eeq
Note that the  ``diagonal action'', ${\sf v}\mapsto q_1{\sf v}q_1^{-1}$, leaves the vector ${\sf v}=1$ invariant; hence, this is the standard $SO(3)\subset SO(4)$. 

In the following we will consider various subgroups of $SO(4)$ that can occur and 
see what restrictions these impose on the form of the Weyl tensor in 5 dimensions. 
However, note that these are not all of the possible subgroups. 

\subsubsection{$SO(2)$}
Here, defining $\kappa=\diag(-2,1,1)$, the Weyl tensor can be chosen to have the following form: 
\beq &&\check{H}=\check{\lambda}_1\kappa, \quad \check{n}=\check{\lambda}_2\kappa, \quad \check{v}^i=\delta^i_3 \check{v}, \nonumber \\
&& A_B=(A_{45},0,0), \quad \bar{S}=s\kappa, \quad \bar{R}, \\
 &&\hat{H}=\hat{\lambda}_1\kappa, \quad \hat{n}=\hat{\lambda}_2\kappa, \quad \hat{v}^i=\delta^i_3 \hat{v}, \nonumber
\eeq
\subsubsection{$SO(2)\times SO(2)$} For this case, the Weyl tensor can be chosen to be of the form ($\kappa=\diag(-2,1,1)$):
\beq &&\check{H}=\hat{H}={\lambda}_1\kappa, \quad \check{n}=\hat{n}={\lambda}_2\kappa, \quad \check{v}^i=\hat{v}^i=0, \nonumber \\
&& A_B=0, \quad \bar{S}=s\kappa, \quad \bar{R}=\tfrac 32(s+2\lambda_1).
\eeq
\subsubsection{$SO(3)$}
Here, there is only one independent component, namely $\bar{R}$. Hence, this is automatically of type D. 
\subsubsection{$SU(2)$}
Defining $D=\diag(\lambda_1,\lambda_2,\lambda_3)$ where $\lambda_1+\lambda_2+\lambda_3=0$, the Weyl tensor can be chosen to be of the following form:
\beq &&\check{H}=\hat{H}=D, \quad \check{n}=\hat{n}=\sqrt{2}D, \quad \check{v}^i=\hat{v}^i=0, \nonumber \\
&& A_B=0, \quad \bar{S}=-2D, \quad \bar{R}=0.
\eeq
\subsubsection{$U(2)$} Again defining $\kappa=\diag(-2,1,1)$, the Weyl tensor can be chosen to have the following form: 
\beq &&\check{H}=\hat{H}=\lambda\kappa, \quad \check{n}=\hat{n}=\sqrt{2}\lambda\kappa, \quad \check{v}^i=\hat{v}^i=0, \nonumber \\
&& A_B=0, \quad \bar{S}=-2\lambda\kappa, \quad \bar{R}=0.
\eeq
\subsubsection{$SO(4)$}
This is the conformally flat case; hence, this is of type $O$. 

\subsection{Examples}
\paragraph{An Einstein space:}
Let us consider a 5-dimensional example and determine the Weyl type of this metric. 
We will consider the particular Kundt metric \cite{CFH}:
\beq \d s^2&=&2\d u\left[\d v+\sigma v^2\d u+\alpha v(\d x+\sin y\d z)\right] \nonumber \\
&& +a^2(\d x+\sin y\d z)^2+b^2(\d y^2+\sin^2y\d z^2).
\eeq
By chosing 
\[ \sigma=\frac{a^2}{4b^4}, \quad \alpha=\frac{\sqrt{2(a^2-b^2)}}{b^2}, \]
this is an Einstein space with $R_{\mu\nu}=[(2b^2-a^2)/(2b^4)]g_{\mu\nu}$. 
By using the standard Kundt frame, 
\beq
&& {\ell}= \d u, \quad {\bf n}=\d v+\sigma v^2\d u+\alpha v(\d x+\sin y\d z), \nonumber \\
&& {\bf m}^3=a(\d x+\sin y\d z), \quad {\bf m}^4=b\d y, \quad {\bf m}^5=b\sin y\d z, 
\eeq
the Weyl tensor is seen to have b.w. $0$, $-1$ and $-2$ terms. However, we still have 
some null-rotations we can use to try to simplify the Weyl tensor further. 
Indeed, performing the null-rotation as follows:
\beq 
&&\tilde{\ell}=\ell, \quad \tilde{\bf n}={\bf n}-\tfrac{\zeta^2}{2}{\bf \ell}-\zeta{\bf m}^3, \nonumber \\
&& \tilde{\bf m}^3={\bf m}^3+\zeta{\ell}, \quad \tilde{\bf m}^4={\bf m}^4, \quad \tilde{\bf m}^5={\bf m}^5,
\eeq
where $\zeta=v\alpha$, brings the Weyl tensor into a type D form. In the tilded null-frame, the only non-zero components of the Weyl tensor are:
\beq
&& \bar{R}= \frac{a^2+2b^2}{4b^4}, \quad A_{45}=-\frac{a}{2b^4}\sqrt{2(a^2-b^2)}, \nonumber \\
&& \quad\bar{S}_{ij}=-\frac{(a^2-b^2)}{3b^4}\diag\left(-2,1,1\right).
\eeq
Consequently, we get the result: 
\begin{enumerate}
\item{} $a^2\neq b^2$: Type D with $SO(2)$ isotropy.
\item{} $a^2=b^2$: Type D with $SO(3)$ isotropy. 
\end{enumerate}
Indeed, one can easily check that these isotropies of the Weyl tensor, including the boost-isotropy, 
correspond to actual Killing vectors of the spacetime (in fact, the metric above is also space-time homogeneous)
\paragraph{A 5d spatially homogeneous cosmology:}
Let us consider the following metric:
\beq
\d s^2= -\d t^2+a(t)^2\Big{[}e^{-2w}\left(\d x+\tfrac 12(y\d z-z\d y)\right)^2\qquad &&\nonumber \\
 +e^{-w}\left(\d y^2+\d z^2\right)+\d w^2\Big{]}.&&
\label{HC2}\eeq
Using the obvious orthonormal frame, we get the following results:
\beq &&\check{H}=\hat{H}=\lambda\kappa, \quad \check{n}=\hat{n}=\sqrt{2}\lambda\kappa, \quad \check{v}^i=\hat{v}^i=0, \nonumber \\
&& A_B=0, \quad \bar{S}=-2\lambda\kappa, \quad \bar{R}=0.
\eeq
Consequently, the Weyl tensor possesses an $U(2)$ isotropy.  Indeed, the spatial sections,
for fixed $t$, is the complex hyperbolic space $\mathbb{H}_{\mathbb{C}}^2$, which can be
considered as the homogeneous space $\frac{SU(1,2)}{U(2)}$.  Consequently, the spacetime
will have an isotropy group $U(2)$ which, of course, the Weyl tensor will inherit.  This
spacetime is thus an example of a spacetime where the Weyl tensor has a
$U(2)$-isotropy.

Let us also check the eigenvalues of the Weyl tensor and see what kind of projection 
operator we can obtain in this case. Due to the simple structure of the Weyl tensor we can easily find the eigenvectors/values of the Weyl operator in this case. Using the orthonormal frame, the eigenvalues and eigenvectors are: 
\begin{itemize}
\item{} $-4\lambda$: $[45]+[x3]$
\item{} $2\lambda$: $[53]+[x4]$, $[34]+[x5]$
\item{} $0$: All bivectors orthogonal to the bivectors above.
\end{itemize} 
If we therefore consider the projector, $\bot$, associated with the eigenvalue 
$-4\lambda$, we note that this can be written: 
\[ \bot^{\alpha\beta}_{\phantom{\alpha\beta}\mu\nu}=F^{\alpha\beta}F_{\mu\nu},\quad  \text{where}\quad  (1/2)F_{\mu\nu}{\mbold\omega}^\mu\wedge{\mbold\omega}^{\nu}={\mbold\omega}^4\wedge{\mbold\omega}^{5}+{\mbold\omega}^x\wedge{\mbold\omega}^3.\] 
We observe that by a contraction we obtain: 
\[ \bot^{\alpha\beta}_{\phantom{\alpha\beta}\mu\beta}=\diag(0,1,1,1,1),\] which is a curvature projector of type $\{1,(1111)\}$. Consequently, in the language of \cite{inv}, \emph{the metric (\ref{HC2}) is an $\mathcal{I}$-non-degenerate metric}. In fact, \emph{all} non-conformally flat 5d metrics with $U(2)$ isotropy of the Weyl tensor are $\mathcal{I}$-non-degenerate. 

\section{Weyl tensors with large symmetry groups}
\label{sect:symmetry}
In this section we will state some results with regards to the Weyl tensor when 
the Weyl tensor possesses a relatively large isotropy group with spacelike orbits.
First, let us state a well-known result regarding almost maximal symmetry:
\begin{thm}
For a $d$-dimensional spacetime, if the Weyl tensor has an $SO(d-1)$ symmetry, then the Weyl tensor vanishes and consequently it is of Weyl type $O$. 
\end{thm}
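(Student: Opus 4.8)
The plan is to realise the $SO(d-1)$ as the isotropy subgroup of the Lorentz group $SO(1,d-1)$ fixing a unit timelike vector $u$ at the point $p$. Since $SO(d-1)$ is compact and connected of the same dimension as the maximal compact subgroup of $SO(1,d-1)$, it is (up to conjugacy) that maximal compact subgroup, acting by rotations on the spacelike hyperplane $u^\perp \cong \mathbb{R}^{d-1}$, with orbits the round spheres $S^{d-2}$ (which are spacelike, as required). An $SO(d-1)$-symmetric Weyl tensor is then an algebraic Weyl tensor at $p$ invariant under this action, and the whole problem reduces to classifying $SO(d-1)$-invariant tensors with the algebraic symmetries of the Weyl tensor.

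First I would perform the electric--magnetic split of $C$ relative to $u$, organising its components by the number of contractions with $u$: the purely spatial part $C_{abcd}$, the ``magnetic'' part $C_{0abc}$, and the ``electric'' part $C_{0a0b}$ (parts with three or four $u$-indices vanish identically by the pair antisymmetry of $C$). Each of these is separately an $SO(d-1)$-invariant tensor on $\mathbb{R}^{d-1}$, so the next step is to invoke the classification of invariant tensors: the only invariant symmetric $2$-tensor is a multiple of $\delta_{ab}$, forcing $C_{0a0b} \propto \delta_{ab}$; the only invariant tensor with Riemann symmetries is a multiple of $\delta_{ac}\delta_{bd} - \delta_{ad}\delta_{bc}$, forcing $C_{abcd}$ to have constant-curvature form; and there is no invariant rank-$3$ tensor obeying $C_{0[abc]} = 0$, so $C_{0abc} = 0$.

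Equivalently and more covariantly, the facts above say that $C$ must be a linear combination $C = a\,P + b\,Q$ of the only two $u$-built Riemann-symmetric invariants, $P_{\mu\nu\alpha\beta} = g_{\mu\alpha}g_{\nu\beta} - g_{\mu\beta}g_{\nu\alpha}$ and $Q_{\mu\nu\alpha\beta} = u_\mu u_\alpha g_{\nu\beta} - u_\mu u_\beta g_{\nu\alpha} - u_\nu u_\alpha g_{\mu\beta} + u_\nu u_\beta g_{\mu\alpha}$. The last step is to impose that $C$ is trace-free: computing $g^{\mu\alpha}P_{\mu\nu\alpha\beta} = (d-1)g_{\nu\beta}$ and $g^{\mu\alpha}Q_{\mu\nu\alpha\beta} = -g_{\nu\beta} + (d-2)u_\nu u_\beta$, and using that $g_{\nu\beta}$ and $u_\nu u_\beta$ are linearly independent, the vanishing of the trace separates into $a(d-1) - b = 0$ and $(d-2)b = 0$. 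For $d > 2$ this forces $b = 0$ and then $a = 0$, hence $C \equiv 0$, i.e. Weyl type $O$.

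I expect the only real subtlety to be the step asserting that the invariant pieces are exhausted by the $\delta$- and $\delta\delta$-structures: one must rule out the $\epsilon$-tensor contributions that become available in the low dimensions $d-1 = 3$ (for $C_{0abc}$) and $d-1 = 4$ (for $C_{abcd}$). This is settled precisely by the first Bianchi (cyclic) identity $C_{\alpha[\beta\mu\nu]} = 0$, since the $\epsilon$-structures are totally antisymmetric and hence violate it. Everything else is routine invariant theory together with the two-line trace computation.
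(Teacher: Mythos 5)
Your proof is correct and complete. The paper itself gives essentially no argument here --- it declares the result ``well-known'' and points to the Weyl operator in the orthonormal frame (its Appendix D), where one would impose invariance of the $[ti]$, $[tx]$, $[ij]$, $[xi]$ blocks under the full $SO(d-1)$ mixing $\omega^x$ with the ${\bf m}^i$. You instead work directly on the tensor components via the $1+(d-1)$ electric--magnetic split relative to the fixed timelike vector $u$, and supply the two ingredients the paper leaves implicit: (i) the invariant-theory step, that the only $SO(d-1)$-invariant candidates are built from $\delta_{ab}$ (including the correct disposal of the exceptional $\epsilon$-structures in $d-1=3$ and $d-1=4$ via the cyclic identity), and (ii) the trace computation on $aP+bQ$, which forces $a=b=0$ for $d>2$. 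Your preliminary reduction --- that a compact connected $SO(d-1)\subset SO(1,d-1)$ is conjugate to the stabiliser of a timelike direction --- is also the right justification for the setup the paper simply assumes. The two routes are the same argument in different clothing (blocks of the bivector operator versus irreducible pieces of the tensor); yours is the more self-contained and checkable version, while the paper's operator formulation has the advantage of slotting directly into the block machinery it uses for the subsequent $SO(d-2)$ and $SO(d-3)$ theorems.
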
 
\begin{proof}
This proof is well-known and can easily be obtained using the Weyl operator in the orthonormal frame, see Appendix \ref{AppOrtho}. 
\end{proof}
This means that, for $d>3$, all such spacetimes are conformally flat. Examples of such 
spacetimes would be the higher-dimensional Friedmann-Robertson-Walker models. 

\begin{thm}
If the isotropy is $SO(d-2)$ (and $d>4$), then the spacetime necessarily is of Weyl type D(bcd) and has only one independent component, namely $\bar{R}$. 
\end{thm}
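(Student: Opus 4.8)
The plan is to identify the abstract $SO(d-2)$ isotropy with the spin subgroup acting on an adapted null frame, and then to observe that invariance of the Weyl tensor under the full spin group forces each irreducible component under the spins to be an $SO(n)$-invariant tensor; a short representation-theoretic count then shows that only the scalar $\bar{R}$ can survive. Throughout I write $n=d-2$, so that the hypothesis $d>4$ reads $n\ge 3$.

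First I would adapt the frame. Since the isotropy group $H\cong SO(d-2)$ has spacelike orbits, its representation on the tangent space $\mathbb{R}^{1,d-1}$ decomposes as a spacelike $\mathbb{R}^{d-2}$, on which $H$ acts as the standard rotations, together with a pointwise-fixed orthogonal complement $W$. Counting signatures forces $W$ to be a two-dimensional Lorentzian plane; choosing null vectors $\mbold{\ell},{\bf n}$ spanning $W$ and an orthonormal basis ${\bf m}^i$ for its complement realises $H$ precisely as the spins $G\in SO(n)$ of Section~2. The purpose of this step is to justify that ``isotropy $SO(d-2)$ with spacelike orbits'' is, after a frame choice, exactly invariance of ${\sf C}$ under all spins.

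Next, invariance of ${\sf C}$ under every spin means that each block, decomposed into irreducible representations of $SO(n)$ as in the boost-weight analysis, must be a fixed invariant tensor. I would then run through the components established earlier: the vectors $\hat{v}^i,\check{v}^i$ lie in the standard representation; the symmetric trace-free tensors $\hat{H}_{ij},\check{H}_{ij},\bar{S}_{ij}$ lie in the symmetric trace-free representation; the mixed tensors $\hat{T}^i_{~jk},\check{T}^i_{~jk}$ (subject to the cyclic and trace conditions already imposed) lie in the trace-free mixed-symmetry representation; the two-form $A_{ij}$ lies in $\Lambda^2$; and $\bar{C}_{ijkl}$ lies in the $n$-dimensional Weyl representation. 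For $n\ge 3$ none of these constrained representations contains a trivial constituent, so---$SO(n)$ being compact---each such component admits no nonzero invariant and must vanish. Only the Ricci scalar $\bar{R}$, transforming trivially, can remain; by the b.w.\ $0$ relations this then also fixes $\Phi=-\tfrac12\bar{R}$ and the trace parts of $M$ and $\bar{H}$. Collecting these: the vanishing of $\hat{H},\hat{v},\hat{T}$ and $\check{H},\check{v},\check{T}$ (hence of the induced $\hat K,\hat L,\check K,\check L$) removes all nonzero boost-weight terms, giving Weyl type D, while $\bar{S}=0$, $\bar{C}=0$ and $A=0$ are exactly conditions (b), (c), (d), so the tensor is of type D(bcd) with $\bar{R}$ its sole independent component.

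The hard part will be the representation theory in the exceptional low dimensions, where the invariant $\varepsilon$-symbol could in principle produce an invariant: for $n=3$ one must check that the candidate $\check{T}^i_{~jk}\propto\varepsilon_{ijk}$ fails the cyclic identity $\check{T}_{(ijk)}=0$ (equivalently that it corresponds to the non-trace-free $\check{n}_{ij}\propto\delta_{ij}$, which the decomposition excludes), and for $n=4$ that the candidate $\bar{C}_{ijkl}\propto\varepsilon_{ijkl}$ fails the first Bianchi identity. These exclusions---together with the fact that $\Lambda^2$ carries the invariant $\varepsilon_{ij}$ only when $n=2$---are precisely where the hypothesis $d>4$ is used, most visibly to force $A_{ij}=0$ and thereby obtain condition (d).
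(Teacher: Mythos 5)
Your proposal is correct and follows essentially the same route as the paper: identify the $SO(d-2)$ isotropy with the spins acting on the ${\bf m}^i$, then observe that each irreducible spin component other than the scalar $\bar{R}$ admits no nonzero $SO(n)$-invariant and must vanish. Your explicit exclusion of the low-dimensional $\varepsilon$-built candidates (the cyclic identity killing $\varepsilon_{ijk}$ for $\check T$ at $n=3$, the Bianchi identity killing $\varepsilon_{ijkl}$ for $\bar C$ at $n=4$, and $n\ge 3$ killing an invariant $A_{ij}$) is more detailed than the paper's terse argument but fills in exactly the steps it leaves implicit, including why $d>4$ is needed.
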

\begin{proof}
We can assume this group acts on the basis vectors ${\bf m}^i$. Consequently, 
all matrices need to be completely symmetric under the full set of spins and this 
implies that $\hat{H}=\check{H}=\bar{S}=0$. It is also clear that $\check{v}=\hat{v}=0$ 
and that $\check{T}=\hat{T}=0$. Moreover, $\bar{C}_{ijkl}$ is 
a ``Weyl tensor'' with maximal symmetry; hence, this is zero also. 
For $A_{ij}$ and $d>4$, we also need $A_{ij}=0$. The only scalar is $\bar{R}$,
which is clearly invariant and therefore the only component that can be non-zero.
\end{proof}
Note that for $d=4$, the symmetry $SO(2)$ also requires the Weyl tensor to be of type D; 
however, $A_{45}$ needs not be zero. Therefore, in 4 dimensions the Weyl tensor may have 2 non-zero components.
\begin{cor}
The eigenvalues of a Weyl tensor with $SO(d-2)$ symmetry are:
\begin{itemize}
\item{} $d=4$: 
\[ -\frac 14(\bar{R}\pm 2iA_{34})~~[\times 2], \quad \frac 12(\bar{R}\pm 2iA_{34}).\]
\item{} $d>4$:
\[ -\frac{1}{2(d-2)}\bar{R}~~[\times 2(d-2) ], \quad \frac 12\bar{R}, \quad \frac{1}{(d-2)(d-3)}\bar{R}~~[\times\tfrac{(d-2)(d-3)}2 ]\] 
\end{itemize}
\end{cor}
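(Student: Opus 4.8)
The plan is to obtain both eigenvalue lists by specialising the type~D block operator to the degenerate configuration forced by the preceding theorem, and then to read off the spectrum block by block. By that theorem, for $d>4$ the symmetry forces $\bar{S}_{ij}=\bar{C}_{ijkl}=A_{ij}=0$, leaving $\bar{R}$ as the only surviving datum, whereas for $d=4$ the extra component $A_{34}$ is also allowed (as remarked just after the theorem). Since these two situations behave quite differently, I would separate them from the outset.

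For $d=4$ ($n=2$) I would simply invoke the explicit 4-dimensional type~D computation given above: the $2\times2$ block $M$ and the middle block $\begin{bmatrix}-\Phi & -A^t\\ A & \bar{H}\end{bmatrix}$ have spectra $-\tfrac14(\bar{R}\pm 2iA_{34})$ and $\tfrac12(\bar{R}\pm 2iA_{34})$ respectively. Because $M$ and $M^{t}$ share the same eigenvalues, the former pair occurs with multiplicity two, which is exactly the claimed list.

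For $d>4$ I would substitute $A=\bar{S}=\bar{C}=0$ into the b.w.-$0$ relations. The relation $M_{ij}=-\tfrac{1}{2n}\bar{R}\delta_{ij}-\tfrac12\bar{S}_{ij}-\tfrac12 A_{ij}$ then collapses to $M=-\tfrac{1}{2(d-2)}\bar{R}\,{\sf 1}_{n}$, so $M$ and $M^{t}$ together furnish the eigenvalue $-\tfrac{1}{2(d-2)}\bar{R}$ with multiplicity $2(d-2)$. With $A=0$ the middle block is block-diagonal, and the scalar entry $-\Phi=\tfrac12\bar{R}$ supplies that eigenvalue exactly once.

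The one step requiring genuine care --- and the main obstacle --- is showing that the remaining block $\bar{H}$ is a pure multiple of the identity. Setting $\bar{C}=0$ and $\bar{R}_{ij}=\tfrac1n\bar{R}\delta_{ij}$ in the irreducible decomposition of $\bar{H}_{BC}$, the Weyl and trace-free Ricci contributions drop out and the two remaining terms must be recombined. Expanding the antisymmetrisations $\delta_{i[k}\bar{R}_{l]j}-\delta_{j[k}\bar{R}_{l]i}$ and $\delta_{i[k}\delta_{l]j}$, and keeping track of the factor $\tfrac12$ appearing in the bivector metric $\eta_{MN}$, the coefficients combine to give $\bar{H}_{ijkl}=\tfrac{\bar{R}}{n(n-1)}(\delta_{ik}\delta_{jl}-\delta_{il}\delta_{jk})$, i.e.\ the curvature operator of an $n$-dimensional space of constant curvature. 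On the spatial bivector basis $[ij]$ this acts as $\tfrac{\bar{R}}{n(n-1)}{\sf 1}_{m}$, giving the eigenvalue $\tfrac{\bar{R}}{(d-2)(d-3)}$ with multiplicity $m=\tfrac{(d-2)(d-3)}{2}$. A final tally, $2(d-2)+1+\tfrac{(d-2)(d-3)}{2}=2n+1+m=\dim V$, confirms that the spectrum is complete and that no eigenvalue has been overlooked.
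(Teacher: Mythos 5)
Your proposal is correct and follows exactly the route the paper intends: the corollary is stated without proof as a direct specialisation of the preceding theorem, with the $d=4$ case read off from the explicit type~D computation and the $d>4$ case obtained by setting $\bar S=\bar C=A=0$ in the block form, reducing $\bar H$ to the constant-curvature operator $\tfrac{\bar R}{(d-2)(d-3)}(\delta_{ik}\delta_{jl}-\delta_{il}\delta_{jk})$. Your coefficient computation for $\bar H$ and the multiplicity tally $2(d-2)+1+\tfrac{(d-2)(d-3)}{2}=\tfrac{d(d-1)}{2}$ both check out.
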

We notice that in this case, as long as $\bar{R}\neq 0$, there will be one projection operator of the block form (here, ${\sf 0}$ is the square zero-matrix) 
\[ \bot=\diag( {\sf 0}_n,1,{\sf 0}_m,{\sf 0}_n),\] 
as long as $d>4$. Consequently, by a contraction $P^\mu_{~\nu}\equiv (\bot)^{\mu\alpha}_{\phantom{\mu\alpha}\nu\alpha}$, we get projectors of type $\{(1,1)(11...1)\}$.

For a smaller isotropy group we have the following result:
\begin{thm}
If the isotropy is $SO(d-3)$ (and $d>5$), then the Weyl tensor can be put in the following form ($\kappa=\diag(d-3,-1,-1,\dots,-1)$):
\beq
&&\check{H}=\check{\lambda}_1\kappa, \quad \check{T}=0, \quad \check{v}^i=\delta^i_3 \check{v}, \nonumber \\
&& A_B=0, \quad \bar{S}=s\kappa, \quad \bar{R},\quad \bar{C}_{ijkl}=0 \\
 &&\hat{H}=\hat{\lambda}_1\kappa, \quad \hat{T}=0, \quad \hat{v}^i=\delta^i_3 \hat{v} \nonumber
\eeq
\end{thm}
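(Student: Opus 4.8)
The plan is to argue block by block in the boost-weight decomposition that invariance under the isotropy group forces each irreducible piece into the stated shape. Set $n=d-2$, so the spins are $SO(n)$ and the assumed isotropy $SO(d-3)=SO(n-1)$ is realised as the subgroup fixing one spatial frame vector; I would choose the frame so that this fixed direction is ${\bf m}^3$, whence $SO(n-1)$ acts in its standard vector representation on the transverse directions ${\bf m}^4,\dots,{\bf m}^{n+2}$ and trivially on ${\bf m}^3$, and $\kappa=\diag(n-1,-1,\dots,-1)$. Since every irreducible block of the Weyl operator transforms under the spins as recorded in Section~4, each block must be an $SO(n-1)$-invariant tensor, and the whole argument reduces to determining these invariants.

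First I would dispose of the easy blocks. For the vectors $\check{v}^i,\hat{v}^i$ the only $SO(n-1)$-invariant is the component along the fixed direction, giving $\check{v}^i=\delta^i_3\check{v}$ and $\hat{v}^i=\delta^i_3\hat{v}$. For a symmetric trace-free two-tensor (the blocks $\check{H},\hat{H},\bar{S}$), splitting off the $3$-direction shows that the mixed part $T_{3a}$ is an $SO(n-1)$-vector and hence vanishes, while the transverse part $T_{ab}$ must be pure trace, $T_{ab}\propto\delta_{ab}$; imposing $\Tr=0$ then forces each invariant to be proportional to $\kappa$, yielding $\check{H}=\check{\lambda}_1\kappa$, $\hat{H}=\hat{\lambda}_1\kappa$ and $\bar{S}=s\kappa$. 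The scalar $\bar{R}$ is automatically invariant and survives.

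The antisymmetric block $A_{ij}$ is where the hypothesis $d>5$ is crucial. Its $3a$-components again form an $SO(n-1)$-vector and vanish, while the transverse part $A_{ab}$ lies in the adjoint representation $\mf{s}\mf{o}(n-1)$; for $n-1\geq 3$, i.e.\ $d>5$, this representation has no trivial summand, so $A_{ab}=0$ and hence $A=0$. (For $d=5$ one has $n-1=2$ and $\mf{s}\mf{o}(2)$ is trivial, which is exactly why the component $A_{45}$ persists in the five-dimensional $SO(2)$ case.) With $A=0$ and $\bar{S}=s\kappa$, the boost-weight zero relations already derived, namely $\Phi=-\tfrac12\bar{R}$ together with the expression for $M_{ij}$, determine $\Phi$ and $M$ in the stated invariant form, so these need not be listed separately.

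It remains to kill the higher-rank blocks $\check{T},\hat{T}$ and $\bar{C}$, which I expect to be the main obstacle. The cleanest route is representation theory: under $SO(n)\downarrow SO(n-1)$ the trivial representation occurs in an irreducible $SO(n)$ module of highest weight $(\lambda_1,\lambda_2,\dots)$ only when $\lambda_2=\lambda_3=\cdots=0$; the mixed-symmetry tensor $\check{T}$ (obeying $\check{T}^i_{~(jk)}=\check{T}_{(ijk)}=\check{T}^i_{~ji}=0$) lives in the module $(2,1,0,\dots)$ and the algebraic Weyl tensor $\bar{C}$ in $(2,2,0,\dots)$, both with $\lambda_2\geq1$, so neither contains an $SO(n-1)$ singlet and $\check{T}=\hat{T}=\bar{C}=0$. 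Alternatively one can argue by elementary invariant theory, writing the most general $SO(n-1)$-invariant tensor of the relevant rank as a combination of $\delta_{ij}$ and the fixed vector $u_i=\delta_{i3}$ and checking that the cyclic (first Bianchi) and trace-free conditions annihilate every such combination; here one must also note that any $\varepsilon$-tensor term available for proper rotations is totally antisymmetric and is therefore killed by the cyclic identity, so it cannot rescue a nonzero $\bar{C}$ or $\check{T}$. Collecting the surviving pieces $\check{\lambda}_1,\hat{\lambda}_1,\check{v},\hat{v},s,\bar{R}$ then gives exactly the asserted canonical form.
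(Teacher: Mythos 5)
Your proof is correct, and its overall strategy coincides with the paper's: realise $SO(d-3)$ as the stabiliser of ${\bf m}^3$ inside the spins and force each irreducible block to be an invariant tensor. The treatment of the vectors, the symmetric trace-free blocks, and $A_{ij}$ (where you correctly isolate $d>5$ as the condition killing the adjoint singlet, matching the paper's remark about the surviving $A_{45}$ in five dimensions) is the same as in the paper. Where you genuinely diverge is in disposing of $\check{T}$, $\hat{T}$ and $\bar{C}$: the paper enumerates the components of $\hat{T}$ explicitly (showing only $\hat{T}^k_{~3i}\propto\delta^k_{~i}$ survives the group action and is then killed by the trace condition $\hat{T}^i_{~3i}=0$) and handles $\bar{C}$ by observing it is a $(d-2)$-dimensional Weyl tensor with $SO(d-3)$ symmetry, hence zero by the earlier ``almost maximal symmetry'' theorem; you instead invoke the $SO(n)\downarrow SO(n-1)$ interlacing branching rule, noting that the modules $(2,1,0,\dots)$ and $(2,2,0,\dots)$ contain no $SO(n-1)$ singlet. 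The branching argument is more uniform and automatically accounts for possible $\varepsilon$-built invariants (which your elementary fallback must, and does, dismiss via the cyclic identity), at the cost of importing a nontrivial representation-theoretic fact; the paper's component check is longer but self-contained and is the same computation one would do to verify the dimension count $n(n^2-4)/3$ for $\check{T}$, which indeed confirms your identification of its highest weight. Both routes are sound.
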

\begin{proof}
We choose the action so that it leaves ${\bf m}^3$ invariant while acting 
on ${\bf m}^i$, $i=4,...,d$ in the standard way as a vector. We now see immediately
that $\check{H}$, $\hat{H}$, $\bar{S}$, $\check{v}$ and $\hat{v}$ have the claimed form. 
For spatial bivectors the action of $SO(d-3)$ will be as follows: $[3i]$ as a vector,
$[ij]$ as bivector, for $i,j=4,...,d$. Consequently, since $d>5$, then $A_B=0$. 
For the tensor $\hat{T}$, we get $\hat{T}^3_{~3i}=0$, $\hat{T}^3_{~ij}=0$, 
$\hat{T}^k_{~ij}=0$, and $\hat{T}^k_{~3i}\propto\delta^k_{~i}$. However, 
since $\hat{T}^i_{~3i}=0$, we thus get $\hat{T}=0$. Similarly, $\check{T}=0$. For the ``Weyl tensor'' $\bar{C}_{ijkl}$, this is the $(d-2)$-dimensional Weyl tensor with $SO(d-3)$ symmetry; consequently, $\bar{C}_{ijkl}=0$. 
\end{proof}
Finally, we should point out that we still have null rotations and a boost left which can be used to simplify the Weyl tensor even further. 
\begin{thm}
Assume that the spacetime has odd dimension, $d=2k+1>4$. Assume also that the Weyl tensor is non-zero with  isotropy $U(k)$. Then the spacetime is locally conformal to 
\[ -\d t^2+\d \sigma^2, \] 
where $\d \sigma^2$ is one of the following Riemannian spaces:
\begin{enumerate}
\item{} complex projective space, $\mathbb{C}\mathbb{P}^k$, with the Fubini-Study metric.
\item{} complex hyperbolic space, $\mathbb{H}_{\mathbb{C}}^k$, with the Bargmann metric.
\end{enumerate}\label{thmU(k)}
\end{thm}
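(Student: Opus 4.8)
The plan is to exploit the fact that a $U(k)$ isotropy group, sitting inside $SO(d-1)\subset SO(1,d-1)$ with spacelike orbits, singles out two pointwise-defined structures. First, $U(k)\subset SO(2k)$ fixes a unique timelike line (the orthogonal complement of its $2k$-dimensional invariant subspace), giving a unit timelike direction $\mathbf{u}$; second, $U(k)$ is exactly the centraliser in $SO(2k)$ of an almost complex structure $J$ on the spatial distribution $\mathbf{u}^{\perp}\cong\mathbb{C}^k$, so $J$ (equivalently the Kähler form $F(\cdot,\cdot)=g(J\cdot,\cdot)$) is determined up to sign. I would work in an orthonormal frame adapted to $\mathbf{u}$ (as in Appendix \ref{AppOrtho}) rather than in a null frame, since $U(k)\not\subset SO(n)$ and so the spins alone do not exhibit the full isotropy in the block decomposition (\ref{WeylOperator}). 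In this frame the first task is algebraic: decompose the space of tracefree curvature tensors under $U(k)$ and show that the invariant subspace is one–dimensional, spanned by the conformal curvature of a complex space form and parametrised by a single scale $\lambda$. This is exactly the pattern of the $U(2)$ results of Section \ref{sect6}; concretely, $J$ can be recovered invariantly as the eigenbivector of the Weyl operator belonging to its distinguished extreme eigenvalue, just as $[45]+[x3]$ appeared for the eigenvalue $-4\lambda$ in that example.

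Next I would promote these pointwise structures to smooth tensor fields $\mathbf{u}$, $J$, $\lambda$ and extract the conformal product. The idea is to feed the algebraic form of the Weyl tensor into the contracted second Bianchi identity in order to control the derivatives of $\mathbf{u}$, $J$ and $\lambda$. The aim is to show that $\mathbf{u}$ is hypersurface orthogonal, so that $\mathbf{u}^{\perp}$ is integrable, and that there is a conformal factor $\Omega$ for which $\tilde g=\Omega^{2}g$ makes $\mathbf{u}$ a unit timelike geodesic (indeed Killing) field, putting $\tilde g$ in the static product form $-\d t^2+\d\sigma^2$ with $t$ the arc length along $\mathbf{u}$ and $\d\sigma^2$ the induced metric on a leaf. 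Because the Weyl tensor is conformally invariant, the whole $U(k)$ structure descends unchanged to the leaf.

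It then remains to identify $\d\sigma^2$. On a leaf the $U(k)$-invariant almost complex structure $J$ is, by integrability together with the Bianchi identities, forced to be parallel, so $\d\sigma^2$ is Kähler of real dimension $2k$. The $U(k)$-invariance of its curvature makes the holomorphic sectional curvature pointwise constant, and since $d>4$ gives complex dimension $k\ge 2$, the Schur-type theorem for Kähler manifolds upgrades this to a globally constant holomorphic sectional curvature. Hence $\d\sigma^2$ is a complex space form: $\mathbb{C}\mathbb{P}^k$ with the Fubini-Study metric (positive curvature), flat $\mathbb{C}^k$ (zero), or $\mathbb{H}_{\mathbb{C}}^k$ with the Bargmann metric (negative). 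The flat case is conformally flat and so has vanishing Weyl tensor, contradicting the hypothesis; this leaves precisely the two stated alternatives.

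The hard part will be the middle step: passing from the \emph{pointwise} algebraic $U(k)$-invariance of the Weyl operator to the \emph{global} conformal product structure. This requires using the differential Bianchi identity both to establish the covariant control of $\lambda$, $\mathbf{u}$ and $J$ (in particular the integrability of $\mathbf{u}^{\perp}$ and the parallelism of $J$ along the leaves) and to guarantee that a consistent conformal factor $\Omega$ exists. The structural subtlety that $U(k)$ is not contained in the spin subgroup $SO(n)$, so that the null-frame block form (\ref{WeylOperator}) does not directly display the symmetry, is what forces the move to the orthonormal frame and makes this bookkeeping the most delicate part of the argument.
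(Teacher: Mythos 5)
Your algebraic core is the same as the paper's: both arguments move to the orthonormal frame of Appendix \ref{AppOF} precisely because $U(k)$ (with $2k=d-1$) cannot sit inside the spin group $SO(d-2)$ of the null frame, both rest on the irreducibility of the standard real representation of $U(k)$ on the spatial tangent space, and both land on a one-dimensional space of invariant trace-free curvature tensors spanned by the Weyl tensor of $-\d t^2+(\text{complex space form})$. The paper gets there by a slightly more concrete mechanism than your ``decompose under $U(k)$'' step: invariance first forces $C_{txtx}=\lambda$, $C_{titj}=\lambda\delta_{ij}$ and all other $t$-components to vanish, and then the single trace identity $C^{t\mu}_{\phantom{t\mu}t\mu}=0$ kills $\lambda$, so the Weyl tensor is purely spatial; your decomposition would reproduce this, but the trace trick is worth making explicit since it is the step that eliminates the a priori allowed ``electric'' invariant. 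Where you genuinely diverge is the second half. The paper's proof essentially stops after the algebra: it writes $C=\phi^2\tilde C$ with $\tilde C$ a constant $U(k)$-invariant tensor and simply asserts that $\phi$ supplies the conformal factor identifying the spacetime with one of the two products --- no Bianchi identity, no hypersurface orthogonality of $\mathbf{u}$, no K\"ahler--Schur argument on the leaves. Your programme (contracted Bianchi to control $\d\lambda$, $\nabla\mathbf{u}$ and $\nabla J$; integrability of $\mathbf{u}^\perp$; constancy of the holomorphic sectional curvature via Schur for $k\geq 2$; exclusion of the flat case by the nonvanishing hypothesis) goes beyond what the paper supplies and, if carried out, would close a step the published proof leaves implicit. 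The caveat is that you have only outlined this step, not executed it; but since the paper does not execute it either, your proposal is at least as complete as the original and is more candid about where the real analytic work lies.
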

\begin{proof}
First we note that the group $U(k)$ acts irreducibly on the spatial tangent 
space spanned by $\{ {\mbold \omega}^x, {\bf m}^i\}$ (orthonormal frame). 
Using the bivector operator for the orthonormal frame (see Appendix \ref{AppOF}), we find that the Weyl tensor components involving $t$ have to be of the following form:
\[ C_{txtx}=\lambda, \quad C_{titj}=\lambda\delta_{ij}, \quad \text{other} ~C_{t\alpha\mu\nu}=0.\]
However, $C^{t\mu}_{\phantom{t\mu}t\mu}=0$, which means that $\lambda=0$ and thus $C_{t\alpha\mu\nu}=0$. Consequently, the Weyl tensor is purely spatial with symmetry $U(k)$. A standard result is that this Weyl tensor only has one independent component. The Weyl tensor can therefore be written $C=\phi^2\tilde{C}$, where $\phi$ is some function and $\tilde{C}$ is a constant tensor in the orthonormal frame. This tensor is $U(k)$-symmetric and has only one component; therefore, the Weyl tensor has to be proportional to one of the 2 spaces given (the proportionality factor, $\phi^2$, will be related to the conformal factor).  
\end{proof}
In the above theorem the group $U(k)$ acts irreducibly on the spatial vectors. Consequently, if we consider the rank 2 tensor $T^{\mu}_{~\nu}\equiv C^{\mu\alpha\beta\gamma}C_{\nu\alpha\beta\gamma}$, it must necessarily have Segre type $\{1,(11...1)\}$ or $\{(1,11...1)\}$. By the proof we see that $T^t_{~t}=0$ while $T^{\mu}_{~\mu}\neq 0 $  since it is a sum of squares. Therefore, the tensor $T^{\mu}_{~\nu}$ has Segre type $\{1,(11...1)\}$, and hence:
\begin{cor}
A spacetime for which the Weyl tensor fulfills the conditions in Theorem \ref{thmU(k)} is $\mathcal{I}$-non-degenerate.
\end{cor}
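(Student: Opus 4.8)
The plan is to exhibit an explicit curvature operator whose Segre type forces $\mathcal{I}$-non-degeneracy via the criterion recalled in the introduction and in the appendix on curvature operators. The natural candidate is the rank-2 operator $T^{\mu}_{~\nu}\equiv C^{\mu\alpha\beta\gamma}C_{\nu\alpha\beta\gamma}$ introduced just above: being polynomial in the Weyl tensor it is a genuine curvature operator, and it inherits the $U(k)$-isotropy of the Weyl tensor.

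First I would determine the block structure of $T$. From the proof of Theorem \ref{thmU(k)} we have $C_{t\alpha\mu\nu}=0$, so every contraction involving the timelike leg vanishes; hence $T^{t}_{~t}=0$ and $T$ has no mixing between the timelike direction and the spatial tangent space. Next I would use that $U(k)$ acts irreducibly on the spatial vectors (established in the same proof): by Schur's lemma any symmetric $U(k)$-invariant endomorphism of the spatial space is a multiple of the identity, so the spatial block of $T$ equals $\mu\,{\sf 1}$ for a single scalar $\mu$. Finally, in the orthonormal frame $T^{\mu}_{~\mu}$ is a full contraction of the purely spatial Weyl tensor with itself, i.e. a sum of squares, which is strictly positive since the Weyl tensor is non-zero; therefore $\mu\neq 0$.

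These facts together show that $T$ has a distinct timelike eigenvalue $0$ and a $(d-1)$-fold degenerate spatial eigenvalue $\mu\neq 0$, i.e. Segre type $\{1,(11\dots 1)\}$, with the timelike eigendirection genuinely separated from the spatial block (this is exactly what excludes the degenerate alternative $\{(1,11\dots 1)\}$). The main obstacle, and the only non-routine step, is the final implication: converting ``admits a curvature operator of Segre type $\{1,(11\dots 1)\}$ with a distinct timelike eigenvalue'' into ``$\mathcal{I}$-non-degenerate.'' Here I would appeal to the higher-dimensional analogue of the $\mathcal{I}$-non-degeneracy theorem of \cite{inv}: the distinguished timelike eigendirection is determined by polynomial invariants and can be used to partially fix the frame (in particular the boost), so that any non-vanishing positive- or negative-boost-weight curvature component would be detected by some invariant, which is impossible in a degenerate Kundt spacetime. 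Since the forbidden Segre types listed in the introduction are precisely those lacking such a distinguished timelike eigendirection, the presence of $T$ rules out degenerate-Kundt behaviour and the spacetime is $\mathcal{I}$-non-degenerate.
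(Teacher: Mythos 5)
Your proposal is correct and follows essentially the same route as the paper: the paper's own argument (given in the paragraph preceding the corollary) also considers $T^{\mu}_{~\nu}=C^{\mu\alpha\beta\gamma}C_{\nu\alpha\beta\gamma}$, uses the irreducible $U(k)$-action to restrict the Segre type to $\{1,(11\dots1)\}$ or $\{(1,11\dots1)\}$, and then uses $T^t_{~t}=0$ together with the sum-of-squares positivity of the spatial part to select $\{1,(11\dots1)\}$ and invoke the $\mathcal{I}$-non-degeneracy criterion. Your additional detail on Schur's lemma and on why the distinguished timelike eigendirection excludes degenerate-Kundt behaviour only makes explicit what the paper leaves implicit.
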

\section{Summary}

In this paper we have considered bivectors and defined 
the Weyl bivector operator in arbitrary dimensions. 
We have then utilized the Weyl bivector
operator and the boost weight decomposition of the Weyl tensor
and consequently refined the algebraic classification of  \cite{class}
in terms of the irreducible representations of the spins.  Various algebraically
special cases can now  easily be defined.
In Tables \ref{dim4}, \ref{dim5}, and \ref{dim>6} we have summarised the classification of
the Weyl tensor and its independent components in dimensions $d=4$, $d=5$ and $d\geq 6$,
respectively.

All of the algebraically special cases in 5d can be delineated precisely; this will be done in a future work \cite{dim5}.
\newpage
\begin{table}[ht]
\begin{tabular}{|r|l|l|}
\hline 
b.w. & Ind. Components & Weyl components \\
\hline 
$+ 2$ & $\hat{H}_{ij}$  & $ C_{0i0j}=\hat{H}_{ij}$ \\
$+1$  & $\hat{v}_i$ & 
$C_{0ijk}=\delta_{ij}\hat{v}_k-\delta_{ik}\hat{v}_j, \quad  
C_{010i}=-\hat{v}_i $ \\
$0$  & $\bar{R}$, $A_{34}$ & $\begin{cases}C_{1i0j}=-\tfrac 14\bar{R}\delta_{ij}-\tfrac 12A_{ij},\quad C_{01ij}=A_{ij},\\
C_{0101}=-\tfrac 12 \bar{R}, \quad C_{ijkl}=\tfrac 12\bar{R}(\delta_{ik}\delta_{jl}-\delta_{il}\delta_{jk})
\end{cases}$ \\
$-1$ &  $\check{v}_i$ & 
$C_{1ijk}=\delta_{ij}\check{v}_k-\delta_{ik}\check{v}_j, \quad  
C_{011i}=\check{v}_i$ \\
$-2$ &   $\check{H}_{ij}$  & $ C_{0i0j}=\check{H}_{ij}$ \\
\hline
\end{tabular}
\caption{Dimension $d=4$}\label{dim4}
\end{table}
\begin{table}[ht]
\begin{tabular}{|r|l|l|}
\hline 
b.w. & Ind. Components & Weyl components \\
\hline 
$+ 2$ & $\hat{H}_{ij}$  & $ C_{0i0j}=\hat{H}_{ij}$ \\
$+1$  & $\hat{v}_i$, $\hat{n}_{ij}$ & 
$C_{0ijk}=\delta_{ij}\hat{v}_k-\delta_{ik}\hat{v}_j+\varepsilon_{jkl}\hat{n}^{l}_{~i}, \quad  
C_{010i}=-2\hat{v}_i $ \\
$0$  & $\bar{R}$, $\bar{S}_{ij}$, $A_{ij}$ & $\begin{cases}C_{1i0j}=-\tfrac 12\bar{R}_{ij}-\tfrac 12A_{ij},\quad C_{01ij}=A_{ij},\\
C_{0101}=-\tfrac 12 \bar{R}, \quad C_{ijkl}=\bar{R}_{ijkl}
\end{cases}$ \\
$-1$ &  $\check{v}_i$, $\check{n}_{ij}$ & 
$C_{1ijk}=\delta_{ij}\check{v}_k-\delta_{ik}\check{v}_j+\varepsilon_{jkl}\check{n}^{l}_{~i}, \quad  
C_{011i}=2\check{v}_i$ \\
$-2$ &   $\check{H}_{ij}$  & $ C_{0i0j}=\check{H}_{ij}$ \\
\hline
\end{tabular}
\caption{Dimension $d=5$: Here $\bar{R}^k_{~ikj}=\bar{R}_{ij}=\frac 13\bar{R}\delta_{ij}+\bar{S}_{ij}$.} \label{dim5}
\end{table}
\begin{table}[ht]
\begin{tabular}{|r|l|l|}
\hline 
b.w. & Ind. Components & Weyl components \\
\hline 
$+ 2$ & $\hat{H}_{ij}$  & $ C_{0i0j}=\hat{H}_{ij}$ \\
$+1$  & $\hat{v}_i$, $\hat{T}^i_{~jk}$ & 
$C_{0ijk}=\delta_{ij}\hat{v}_k-\delta_{ik}\hat{v}_j+\hat{T}_{ijk}, \quad  
C_{010i}=-(d-3)\hat{v}_i $ \\
$0$  & $\bar{R}$, $\bar{S}_{ij}$, $\bar{C}_{ijkl}$, $A_{ij}$ & $\begin{cases}C_{1i0j}=-\tfrac 12\bar{R}_{ij}-\tfrac 12A_{ij},\quad C_{01ij}=A_{ij},\\
C_{0101}=-\tfrac 12 \bar{R}, \quad C_{ijkl}=\bar{R}_{ijkl}
\end{cases}$ \\
$-1$ &  $\check{v}_i$, $\check{T}^i_{~jk}$ & 
$C_{1ijk}=\delta_{ij}\check{v}_k-\delta_{ik}\check{v}_j+\check{T}_{ijk}, \quad  
C_{011i}=(d-3)\check{v}_i$ \\
$-2$ &   $\check{H}_{ij}$  & $ C_{0i0j}=\check{H}_{ij}$ \\
\hline
\end{tabular}
\caption{Dimension $d\geq 6$: Here  $\bar{R}^k_{~ikj}=\bar{R}_{ij}=\frac 1{d-2}\bar{R}\delta_{ij}+\bar{S}_{ij}$.} \label{dim>6}
\end{table}

\newpage
\section*{Acknowledgements} 
We would like to thank Lode Wylleman and Marcello Ortaggio for comments and suggestions. 

This work was supported, in part, by NSERC of Canada. 

\appendix

\section{Algebraic classification}
Given a covariant tensor $T$ with respect to a generalised Newman-Penrose (NP) tetrad (or null 
frame) $\{{\mbold\ell}, {\bf n}, {\bf m}^i\}$, the effect of a boost ${\mbold\ell} \mapsto e^{\lambda}{\mbold\ell}$, ${\bf n} 
\mapsto e^{-\lambda}{\bf n}$ allows $T$ to be decomposed according to 
its boost weight, 
\begin{equation} 
T=\sum_b (T)_{b}, \label{bwdecomp} 
\end{equation} 
where $(T)_{b}$ denotes the boost weight $b$ components of $T$. Recall that the boost weight $b$ components are defined as those components, $T_{ab...d}$, of $T$ that transform according to 
\[ T_{ab...d}\mapsto e^{b\lambda}T_{ab...d},\] 
under the aforementioned boost. 
 
An 
algebraic classification of tensors $T$ has been developed 
\cite{class} which is based on the existence of certain 
normal forms of (\ref{bwdecomp}) through   successive application 
of null rotations and spin-boosts.  In the special case where $T$ 
is the Weyl tensor in four dimensions, this classification reduces 
to the well-known Petrov classification. However, the boost weight 
decomposition can be used in the classification of the Weyl tensor $C$ 
in arbitrary dimensions. For the Weyl tensor we have in general,
\begin{equation} 
C=(C)_{+2}+(C)_{+1}+(C)_{0}+(C)_{-1}+(C)_{-2}, 
\end{equation} 
\noindent in every null frame.  A Weyl tensor is algebraically special if there 
exists a frame in which certain boost weight components can be transformed to zero.

\section{Curvature operators and  curvature projectors} 
A curvature operator, ${\sf T}$, is a tensor considered as a (pointwise) linear operator 
\[ {\sf T}:~V\mapsto V, \] 
for some vector space, $V$, constructed from the Riemann tensor, its covariant derivatives, and the curvature invariants. 
 
The archetypical example of a curvature operator is obtained by 
raising one index of the Ricci tensor.  The Ricci operator is 
consequently a mapping of the tangent space $T_p\mathcal{M}$ into 
itself: 
\[ {\sf R}\equiv(R^{\mu}_{~\nu}):~T_p\mathcal{M}\mapsto T_p\mathcal{M}. \] 
Another example of a curvature operator is the Weyl tensor, considered as an operator, ${\sf C}\equiv (C^{\alpha\beta}_{\phantom{\alpha\beta}\mu\nu}$), mapping bivectors onto bivectors. 
 
For a curvature operator, ${\sf T}$, consider an eigenvector ${\sf 
v}$ with eigenvalue $\lambda$; i.e., ${\sf T}{\sf v}=\lambda{\sf 
v}$. If $d=\mathrm{dim}(V)$ and $n$ is the dimension of the 
spacetime, then the eigenvalues of ${\sf T}$ are $GL(d)$ invariant. Since 
the Lorentz transformations, $O(1,n-1)$, will act via a 
representation $\Gamma\subset GL(d)$ on ${\sf T}$, 
 \emph{the eigenvalue of a curvature operator is an $O(1,n-1)$-invariant curvature scalar}. 
Therefore, curvature operators naturally provide us with a set of 
curvature invariants (not necessarily polynomial invariants) 
corresponding to the set of distinct eigenvalues: $\{\lambda_A 
\}$. Furthermore, the set of eigenvalues are uniquely determined 
by the polynomial invariants of ${\sf T}$ via its characteristic 
equation. The characteristic equation, when solved, gives us the 
set of eigenvalues, and hence these are consequently determined by 
the invariants.

We can now define a number of associated curvature operators. For 
example,  for an eigenvector ${\sf v}_A$ so that ${\sf T}{\sf 
v}_A=\lambda_A{\sf v}_{A}$, we can construct the annihilator 
operator: 
\[ {\sf P}_A\equiv ({\sf T}-\lambda_{A}{\sf 1}).\] 
Considering the Jordan block form of ${\sf T}$, the eigenvalue ${\lambda_A}$ corresponds to a set of Jordan blocks. These blocks are of the form: 
\[ {\sf B}_A=\begin{bmatrix} 
\lambda_A & 0 & 0& \cdots & 0 \\ 
1 & \lambda_A & 0& \ddots  & \vdots \\ 
0      & 1 &\lambda_A& \ddots & 0 \\ 
\vdots & \ddots     &\ddots& \ddots & 0 \\ 
0    &     \hdots &0   &  1      & \lambda_A 
\end{bmatrix}.\] 
There might be several such blocks corresponding to an eigenvalue 
$\lambda_A$; however, they are all such that $({\sf 
B}_A-\lambda_A{\sf 1})$ is nilpotent and hence there exists an 
$n_{A}\in \mathbb{N}$ such that  ${\sf P}_A^{n_A}$ annihilates the 
whole vector space associated with the eigenvalue $\lambda_A$. 
 
This implies that we can define a set of operators $\widetilde{\bot}_A$ with eigenvalues $0$ or $1$ by considering the products 
\[ \prod_{B\neq A}{\sf P}^{n_B}_B=\Lambda_A\widetilde{\bot}_A,\] 
where $\Lambda_A=\prod_{B\neq A}(\lambda_A-\lambda_B)^{n_B}\neq 0$ 
(as long as $\lambda_B\neq \lambda_A$ for all $B$). Furthermore, we can now define\footnote{This corrects a mistake in the algorithm given in \cite{inv} (thanks to Lode Wylleman for pointing this out).}
\[ \bot_A\equiv {\sf 1}-\left({\sf 1}-\widetilde{\bot}_A\right)^{n_A}  \] 
 where $\bot_A$ 
is a \emph{curvature projector}. The set of all such curvature 
projectors obeys: 
\beq {\sf 1}=\bot_1+\bot_2+\cdots+\bot_A+\cdots, 
\quad \bot_A\bot_B=\delta_{AB}\bot_A. 
\eeq We can use these 
curvature projectors to decompose the operator ${\sf T}$: \beq 
{\sf T}={\sf N}+\sum_A\lambda_A\bot_A. \label{decomp} \eeq The 
operator ${\sf N}$ therefore contains  all the information not 
encapsulated in the eigenvalues $\lambda_A$. From the Jordan form 
we can see that ${\sf N}$ is nilpotent; i.e., there exists an 
$n\in\mathbb{N}$ such that ${\sf N}^n={\sf 0}$. In particular, if 
${\sf N}\neq 0$, then ${\sf N}$ is a negative/positive 
boost weight operator which can be used to lower/raise the 
boost weight of a tensor. 
 
Considering the Weyl operator, we can show 
that (where the type refers to  Weyl type): 
\begin{itemize} 
\item{} Type I: ${\sf N}={\sf 0}$, $\lambda_A\neq 0$. 
\item{} Type D: ${\sf N}={\sf 0}$, $\lambda_A\neq 0$. 
\item{} Type II: ${\sf N}^3={\sf 0}$, $\lambda_A\neq 0$. 
\item{} Type III: ${\sf N}^3={\sf 0}$, $\lambda_A=0$. 
\item{} Type N: ${\sf N}^2={\sf 0}$, $\lambda_A=0$. 
\item{} Type O: ${\sf N}={\sf 0}$, $\lambda_A=0$. 
\end{itemize} 
 \section{Null rotations} 
\label{AppNull}
The null rotations act as follows: 
\beq
\underline{\hat{H}}^i_{~j}&=& {\hat{H}}^i_{~j}, \\
\underline{\hat{v}}^i &=&\hat{v}^i -\frac{1}{n-1}\hat{H}^i_{~k}z^k,\\
\underline{\hat{T}}^i_{~jk} &=& {\hat{T}}^i_{~jk}-2z_{[j}\hat{H}^i_{~k]}+\frac{2}{n-1}z_{l}\hat{H}^l_{~[j}\delta^i_{~k]}, \\
\underline{\bar{R}}&=& \bar{R}+4(n-1)\hat{v}^iz_i-2\hat{H}_{ij}z^iz^j, \\
\underline{\bar{R}}_{ij}&=& \bar{R}_{ij}+2\hat{T}_{(ij)k}z^k+2\delta_{ij}\hat{v}_kz^k+2(n-2)\hat{v}_{(i}z_{j)} \nonumber \\
&& +\hat{H}_{ij}|z|^2-2z_{(i}\hat{H}_{j)k}z^k, \\
\underline{A}_{ij} &=& A_{ij}+2n\hat{v}_{[i}z_{j]}-z_k\hat{T}^k_{~ij}+2z_{[i}\hat{H}_{j]k}z^k,\\
\underline{\bar{H}}_{ij\ kl}&=&{\bar{H}}_{ij\ kl}-4z_{[i}\delta_{j][k}\hat{v}_{l]}-4z_{[k}\delta_{l][i}\hat{v}_{j]}-2z_{[i}\hat{T}_{j]kl}-2z_{[k}\hat{T}_{l]ij}\nonumber \\
 && -4z_{[i}\hat{H}_{j][k}z_{l]},\\
(n-1)\underline{\check{v}}^i &=& (n-1)\check{v}^i-\tfrac 12\bar{R}z^i-\tfrac 12\bar{R}^i_{~j}z^j+\tfrac 32A^{ij}z_j+z^jz^k\hat{T}_{jki} \nonumber 
\\
&& -(2n-1)z^iz^k\hat{v}_k+\tfrac 12(n+1)\hat{v}^i|z|^2+z^i\hat{H}_{jk}z^jz^k-\tfrac12\hat{H}^i_{~j}z^j|z|^2 \\
\underline{\check{T}}^i_{~jk}&=&\check{T}^i_{~jk}+\delta^i_{~j}(\check{v}_k-\underline{\check{v}}_k)-\delta^i_{~k}(\check{v}_j-\underline{\check{v}}_j) -z^iA_{jk}+A^i_{~[j}z_{k]}+\mathcal{T}^i_{~jk}(z)\nonumber \\
&& +z_{[j}\bar{R}^i_{~k]}+z^iz_n\hat{T}^n_{~jk}-\tfrac 12|z|^2\hat{T}^i_{~jk}+2z_{[j}\hat{T}_{k]in}z^n+2nz^iz_{[j}\hat{v}_{k]}\nonumber\\
&&-|z|^2\delta^i_{~[j}\hat{v}_{k]} -2\delta^i_{~[j}z_{k]}\hat{v}_{n}z^n+2z^iz_n\hat{H}^n_{~[j}z_{k]}-|z|^2\hat{H}^i_{~[j}z_{k]} \\
\underline{\check{H}}_{ij}&=& \check{H}_{ij}+2n\check{v}_{(i}z_{j)}-2\delta_{ij}\check{v}^kz_{k}-2\check{T}_{(ij)k}z^k-z_{(i}\bar{R}_{j)k}z^k+\tfrac 12|z|^2\bar{R}_{ij}\nonumber \\
&& -\tfrac 12 \bar{R}z_iz_j+3z_{(i}A_{j)k}z^k+\mathcal{H}_{ij}(z,z)+n|z|^2z_{(i}\hat{v}_{j)}-2nz_iz_j\hat{v}_kz^k\nonumber \\&& +\delta_{ij}|z|^2\hat{v}_kz^k
+ 2z^kz^l\hat{T}_{kl(i}z_{j)}+\tfrac 12|z|^2\hat{T}_{(ij)k}z^k\nonumber \\
&& +z_iz_j\hat{H}_{kl}z^kz^l-|z|^2z^kz_{(i}\hat{H}_{j)k}+\tfrac 14|z|^4\hat{H}_{ij}
\eeq
where
\beq
\mathcal{T}_{ijk}(z)&=&z^l\bar{C}_{lijk}+\frac{2}{n-2}\left(z_{[j}\bar{R}_{k]i}-\delta_{i[j}\bar{R}_{k]l}z^l\right)\nonumber \\
&& +\frac{2\bar{R}}{(n-1)(n-2)}\delta_{i[j}z_{k]}. \\
\mathcal{H}_{ij}(z,z)&=&z^kz^l\bar{C}_{kilj}+\frac{1}{n-2}\left(|z|^2\bar{R}_{ij}-2z_{(i}\bar{R}_{j)k}z^k+\delta_{ij}\bar{R}_{kl}z^kz^l\right)\nonumber \\
&& +\frac{\bar{R}}{(n-1)(n-2)}\left(z_iz_j-|z|^2\delta_{ij}\right).
\eeq
\section{Weyl operator in orthonormal frame}
\label{AppOrtho}\label{AppOF}
Another commonly used frame is the orthonormal frame.  Let us also give the Weyl operator in such a frame; this is particularly useful for proving some of the theorems of section \ref{sect:symmetry}. Using the orthonormal frame $\{ {\mbold\omega}^t,{\mbold\omega}^x,{\bf m}^i\}$, we get the bivector basis:
\[ {\mbold\omega}^t\wedge{\bf m}^i, \quad  {\mbold\omega}^t\wedge{\mbold\omega}^x,\quad  {\bf m}^i\wedge{\bf m}^j,\quad   {\mbold\omega}^x\wedge{\bf m}^j,\] 
or for short: $[ti]$, $[tx]$, $[ij]$, $[xi]$. The  metric, $\eta_{MN}$, in bivector space is ( $m=n(n-1)/2$): 

\[ (\eta_{MN})=\frac 12 \begin{bmatrix} 
-{\sf 1}_n & 0 & 0 & 0 \\
0 & -1& 0 & 0\\
0 & 0 & {\sf 1}_m & 0\\
0 & 0 & 0 & {\sf 1}_n
\end{bmatrix},
\]
where ${\sf 1}_n$, and ${\sf 1}_m$ are the unit matrices of size $n\times n$ and $m\times m$, respectively. 
With these assumptions, the operator ${\sf C}$ can be written on the following $(n+1+m+n)$-block form (using the same definitions as eq.(\ref{WeylOperator}))
\beq
{\sf C}=\begin{bmatrix}
\tfrac 12\left(M+M^t-\check{H}-\hat{H}\right) & \tfrac{1}{\sqrt{2}}\left(\hat{K}+\check{K}\right) & -\tfrac{1}{\sqrt{2}}\left(\hat{L}-\check{L}\right) & \tfrac 12\left(M^t-M+\check{H}-\hat{H}\right) \\
\tfrac{1}{\sqrt{2}}\left(\hat{K}^t+\check{K}^t\right) & -\Phi & A^t & \tfrac{1}{\sqrt{2}}\left(\hat{K}^t-\check{K}^t\right) \\
\tfrac{1}{\sqrt{2}}\left(\hat{L}^t-\check{L}^t\right) & -A & \bar{H} & \tfrac{1}{\sqrt{2}}\left(\hat{L}^t+\check{L}^t\right) \\
\tfrac 12\left(M^t-M-\check{H}+\hat{H}\right)  & -\tfrac{1}{\sqrt{2}}\left(\hat{K}-\check{K}\right) & \tfrac{1}{\sqrt{2}}\left(\hat{L}+\check{L}\right) & \tfrac 12\left(M+M^t+\check{H}+\hat{H}\right) 
\end{bmatrix}
\eeq


\begin{thebibliography}{99}

\bibitem{class}  A. Coley, R. Milson, V. Pravda and A. Pravdova, 
2004, Class. Quant. Grav. {\bf 21}, L35 [gr-qc/0401008]; R. 
Milson, A. Coley, V. Pravda and A. Pravdova, 2005, Int. J. Geom. 
Meth. Mod. Phys. {\bf 2}, 41; A. Coley,  2008, Class. Quant. Grav. 
{\bf 25}, 033001 [arXiv:0710.1598]. 
 
\bibitem{PODREF} V. Pravda, A. Pravdov\' a, A. Coley and R. Milson, 2004, Class. 
Quant. Grav. {\bf 21}, 2873; M. Ortaggio, V. Pravda and A. 
Pravdov\' a, 2007, Class. Quant. Grav. {\bf 24}, 1657. 
 
 
 
 
\bibitem{kramer} H Stephani, D Kramer, M A H MacCallum, C A Hoenselaers, E Herlt 2003 
\textit{Exact solutions of Einstein's field equations, second 
edition. Cambridge University Press; Cambridge} 
 
 
\bibitem{inv}    A. Coley, S. Hervik and N. 
Pelavas, 2009, Class. Quant. Grav. {\bf 26}, 025013 [arXiv:0901.0791]. 
 
 
\bibitem{kundt} A. Coley, S. Hervik, G. Papadopoulos and N. 
Pelavas, 2009, Class. Quant. Grav. {\bf 26}, 105016 
[arXiv:0901.0394]. 
 
 
\bibitem{CSI4} A. Coley, S. Hervik and N. 
Pelavas,  2009, Class. Quant. Grav. {\bf 26}, 125011 
[arXiv:0904.4877]. 
 
 
 
\bibitem{Higher}    A. Coley, R. Milson, V. Pravda and A. Pravdova, 2004, 
Class. Quant. Grav. {\bf 21}, 5519 [gr-qc/0410070]. 
 

 
\bibitem{other} J. Podolsky and M. Ortaggio, 2006, 
Class. Quantum Grav. {\bf 23}, 5785;  
M. Ortaggio, V. Pravda and A. Pravdova, arXiv:0901.1561; 
M. Durkee, 2009, Class.Quant.Grav. {\bf 26}, 195010 [arXiv:0904.4367]; 
M. Godazgar and H. S. Reall, arXiv:0904.4368;
M. Ortaggio, 2009, Class.Quant.Grav. {\bf 26} 195015.
 
\bibitem{POD} J. Podolsky and M. Zofka, 2009, 
Class. Quantum Grav. {\bf 26}, 105008. 
  
\bibitem{grover} 
J. Grover {\em et al.}, JHEP {\bf 0907}:069 [arXiv:0905.3047]. 
 
 
\bibitem{CFH} A. Coley, A. Fuster and S. Hervik, 2009, Int. J. Mod. Phys. A,  \textbf{24}, 1105-1118, [arXiv:0707.0957]. 
 
 
\bibitem{CSI}    A. Coley, S. Hervik and N. 
Pelavas, 2006, Class. Quant. Grav. {\bf 23}, 3053; A. Coley, S. 
Hervik and N. Pelavas,  2008, Class. Quant. Grav. {\bf 25}, 
025008. 
 
\bibitem{PENROSE} R. Penrose ``A spinor Approach to General Relativity'' 
{\em Annals of Physics} {\bf 10} 171-201 (1960); R. Penrose and W. Rindler 
{\em ``Spinors and Space-Time''}, vol 1. Cambridge University Press, Cambridge (1984); 
{\em ibid.}, vol 2. Cambridge University Press, Cambridge (1986). 
 
 
\bibitem{HIGHER-D-REVIEW} R. Emparan and H. S. Reall ``Black Holes in Higher Dimensions'' {\em Living Rev. Rel.} {\bf 11} 6 (2008). 

\bibitem{NP5d} A. Garcia-Parrado Gomez-Lobo and J. M. Martin-Garcia, arXiv:0905.2846 
  

\bibitem{DeSmet} P-J. De Smet, 2002, Class. Quant. Grav. {\bf 19}, 4877; 
P-J. De Smet, 2004, Gen. Rel. Grav. {\bf 36}  1501; 
P-J. De Smet, 2005,
Gen. Rel. Grav. {\bf 37},  237. 
 
 
\bibitem{robspin} R. Milson,  A spinorial formulation of 5-dimensional geometry, preprint. 

\bibitem{HallBook} G S Hall, 2004, {\em Symmetries and curvature structure in general Relativity} (World
Science, Singapore).


\bibitem{TV}
F. Tricerri and L. Vanhecke, \textit{Homogeneous Structures on Riemannian Manifolds}, London Math. Soc. LNS 83, 1983, Cambridge University Press.


\bibitem{dim5} A Coley and S Hervik, work in progress.
\end{thebibliography}
\end{document}